\PassOptionsToPackage{nopatch=footnote}{microtype}
\documentclass{article}
\usepackage{algorithm}

\usepackage{dsfont}
\usepackage{tikz}
\usetikzlibrary{decorations.pathreplacing}

\usepackage{hyperref}

\usepackage{marvosym}
\usepackage{placeins}
\usepackage{fancyhdr}

\usepackage[accepted]{icml2026}

\usepackage[T1]{fontenc}
\usepackage{url}
\usepackage{booktabs}
\usepackage{amsfonts}
\usepackage{algorithm}
\usepackage{nicefrac}
\usepackage{microtype}
\usepackage{graphicx}
\usepackage{amsmath,amssymb,amsthm}
\usepackage{bm}

\usepackage{comment}
\usepackage[capitalize,noabbrev]{cleveref}

\usepackage{enumitem}
\usepackage{titlesec}
\usepackage{setspace}

\usepackage{physics}
\usepackage{bbm}

\theoremstyle{plain}
\newtheorem{theorem}{Theorem}[section]

\newtheorem{proposition}[theorem]{Proposition}
\newtheorem{lemma}[theorem]{Lemma}

\theoremstyle{definition}

\theoremstyle{remark}

\newtheorem{assumption}[theorem]{Assumption}

\DeclareMathOperator*{\E}{\mathbb{E}}

\newcommand{\R}{\mathbb{R}}

\DeclareMathOperator{\Law}{Law}

\usepackage{amssymb}

\usepackage[textsize=tiny]{todonotes}

\icmltitlerunning{Partial Identification under High-Dimensional Potential Outcomes and Confounders via Optimal Transport}

\begin{document}

\twocolumn[
  \icmltitle{Partial Identification under High-Dimensional Potential Outcomes and Confounders via Optimal Transport}


  \begin{icmlauthorlist}
    \icmlauthor{Yunfeng Wang}{fdu}
    \icmlauthor{Zhiheng Zhang\textsuperscript{\Letter}}
    {sufe,sufe2}
    \icmlauthor{Zijun Gao}{comp}

  \end{icmlauthorlist}

 \icmlaffiliation{fdu}{Fudan University}
  \icmlaffiliation{sufe}{School of Statistics and Data Science, Shanghai University of Finance and Economics, Shanghai 200433, P.R. China}
    \icmlaffiliation{sufe2}{Institute of Big Data Research, Shanghai University of Finance and Economics, Shanghai 200433, P.R. China}
 \icmlaffiliation{comp}{Department of Data Science and Operations, University of Southern California, USA}

  \icmlcorrespondingauthor{Zhiheng Zhang}{zhangzhiheng@mail.shufe.edu.cn}

  \icmlkeywords{Partial Identification, Optimal Transport, Wasserstein Distance, Sliced Wasserstein, Causal Inference}
  \vskip 0.3in]

\printAffiliationsAndNotice{}  

\begin{abstract}
Partial identification provides informative causal guarantees when point identification is impossible, but existing approaches based on optimal transport (OT) become computationally and statistically intractable in high-dimensional settings. This limitation is particularly severe when both potential outcomes and confounders are high-dimensional, where classical OT-based bounds suffer from the curse of dimensionality and unfavorable convergence rates. To address this challenge, we propose a novel estimator that decomposes the transport problem into a low-dimensional signal subspace and a high-dimensional residual subspace. Unlike existing projection-based methods that discard residual information, we recover the residual transport energy using the Sliced Wasserstein distance, which is computationally efficient and robust to high dimensions. We establish interpretable conditions controlling the approximation gap based on residual structure and provide a data-driven rule for signal dimension selection. Empirical results show that our estimator consistently outperforms projection-only baselines by recovering lost transport energy, yielding more informative causal bounds while remaining computationally tractable in high dimensions.
\end{abstract}


\section{Introduction}

Causal inference aims to quantify the effect of interventions from incomplete data.
In many settings, point identification is unattainable: causal estimands depend on the joint law of potential outcomes, while the observed data only reveal marginal or conditionally marginal distributions.
This places a broad class of problems in the domain of \emph{partial identification} (PI), where the goal is to characterize the set of causal effects consistent with observed information and minimal structural assumptions.
Partial identification plays a central role in modern causal analysis, providing robustness against untestable assumptions and enabling principled decision-making under ambiguity.

Pre-treatment covariates can substantially tighten PI sets by restricting the feasible dependence between counterfactual variables.
Recently, a series of works has established a deep connection between covariate-assisted PI and optimal transport (OT).
Model-agnostic inference frameworks based on OT duality deliver valid causal bounds even when conditional distributions are estimated imprecisely, and admit doubly robust extensions in observational studies \citep{ji2024modelagnostic}.
Related work shows that covariate-assisted PI can be characterized through conditional optimal transport (COT), and that tight relaxations reduce COT to standard OT with explicit penalties, yielding increasingly sharp bounds \citep{lin2025tightening}.
A fundamental difficulty—discontinuity of COT under weak convergence—has been resolved by establishing continuity under adapted Wasserstein topologies, leading to consistent nonparametric estimation of optimal causal bounds \citep{lin2025estimation}.
Together, these results identify conditional transport discrepancies as the correct geometric objects for covariate-assisted PI.

Despite this progress, a critical challenge remains.
In modern applications, covariates and learned representations are often high-dimensional, while conditional OT functionals are notoriously difficult to estimate reliably in such regimes.
Conditioning further reduces effective sample sizes, and direct plug-in estimation of conditional Wasserstein or COT quantities can be statistically unstable or computationally prohibitive.
As a result, existing covariate-assisted OT methods face a common failure mode in practice: either the bounds become overly conservative due to coarse stratification, or they rely on modeling assumptions that undermine robustness.
This reveals a key research gap: \emph{how to construct dimension-adaptive, statistically stable transport quantities that can be used within covariate-assisted PI frameworks without sacrificing theoretical validity}.

We address this gap by developing a conditioned, dimension-adaptive lower bound for quadratic Wasserstein distances in high-dimensional settings.
Our approach is motivated by the observation that, in many causal problems, distributional differences between treatment and control are \emph{anisotropic} and concentrated on a low-dimensional subspace, with the remaining high-dimensional variation behaving approximately isotropically.
Building on this structure, we propose a \emph{conditioned subspace--slicing} (CSS) decomposition. It combines a projected Wasserstein term on a learned low-dimensional signal subspace with a scaled sliced-Wasserstein correction on the residual space.
This estimates the difficult transport component in low dimension while recovering residual energy through one-dimensional projections. 

Our first main result shows that this quantity yields a valid conditional lower bound.
To formalize when the bound is informative, we introduce an \emph{extended spiked transport model} (ESTM), which generalizes spiked covariance models to optimal transport.
Under ESTM, transport energy concentrates on a low-dimensional signal subspace, and we show that Wasserstein projection pursuit (WPP) recovers this subspace as the optimizer of the CSS bound.
The remaining gap between the true Wasserstein distance and the optimized lower bound is governed by a single residual anisotropy term.
A key and somewhat counterintuitive insight is that, once appropriately scaled, random one-dimensional projections can recover the full residual transport energy when the residual component is close to isotropic.
This yields a principled “free lunch”: replacing a high-dimensional transport problem with a combination of low-dimensional OT and inexpensive sliced computations, without loosening the bound under mild conditions.

We complement the population theory with a finite-sample analysis.
We propose a practical estimator based on WPP for the signal term and Monte Carlo sliced Wasserstein estimation for the residual, and derive nonasymptotic error bounds that depend on intrinsic transport dimension rather than ambient dimension.
Our contribution is complementary to existing covariate-assisted PI frameworks: while prior work focuses on inferential robustness, relaxation tightness, and topological consistency, we provide a dimension-adaptive geometric primitive that can be plugged into these pipelines when high dimensionality makes direct conditional transport estimation fragile. Our main contributions are summarized as follows:
\begin{itemize}
\item We introduce a conditioned subspace--slicing lower bound for quadratic Wasserstein distances, serving as the partial identification region, and yielding certified conditional and unconditional transport lower bounds in high dimensions.
\item We formalize an extended spiked transport model under which the bound is near-tight and the optimal subspace is identified by Wasserstein projection pursuit.
\item We develop finite-sample estimators with nonasymptotic error guarantees that scale with intrinsic rather than ambient dimension.
\end{itemize}

\paragraph{Organization.}Section~\ref{sec:preliminaries} reviews Wasserstein distance, projection pursuit, and sliced Wasserstein distances.
Section~\ref{sec:theory} develops the conditioned subspace--slicing bound and its structural properties, and further presents finite-sample estimators together with nonasymptotic error bounds. Section~\ref{sec:experiments} presents the experiments on synthetic data.
Section~\ref{app:proofs} collects proofs.

\subsection{Literature Review and Positioning}\label{sec:related}

\paragraph{Partial identification and covariate assistance.}
Partial identification provides a principled response to the fact that many causal estimands depend on counterfactual joint laws that are not determined by the observed distribution.
Classical work characterizes identified sets under minimal restrictions and highlights the role of auxiliary information in tightening bounds; see, e.g., \citet{manski2003partial} and \citet{imbens2004confidence}.
In contemporary applications, covariate information is often rich and continuous, so ``conditioning to tighten bounds'' becomes an infinite-dimensional operation that must be handled with care: naïve stratification is conservative, while fully parametric modeling may sacrifice robustness~\citep{zhang2026partial, zhang2024tight,zhang2024partial,zhang2023robust}.

\paragraph{OT-based formulations of covariate-assisted causal bounds.}
A recent line of work has established that many covariate-assisted PI problems admit a natural representation in terms of optimal transport (OT) and its conditional variants.
\citet{ji2024modelagnostic} develop a model-agnostic approach to inference for covariate-assisted partially identified effects, leveraging OT duality to obtain valid procedures that remain meaningful even when conditional nuisance components are difficult to estimate.
Complementarily, \citet{lin2025tightening} show that covariate-assisted sharp bounds can be characterized through conditional optimal transport (COT), and propose covariate-aware transport relaxations that reduce COT to standard OT with an explicit penalty, yielding systematically tighter causal bounds as the penalty increases.
A further difficulty is that conditional transport functionals can behave pathologically under weak convergence; \citet{lin2025estimation} address this by identifying topologies under which the relevant conditional transport quantities become continuous, thereby enabling consistent nonparametric estimation of optimal causal bounds.
Taken together, these papers establish OT/COT as the correct geometric primitives for covariate-assisted PI and clarify how sharpness, computation, and inference interact in this paradigm. Our paper instead tackles the curse of dimensionality through a signal--residual subspace decomposition with a sliced-Wasserstein residual correction.

\paragraph{High-dimensional transport as the statistical bottleneck.}
The above developments also expose a bottleneck that is largely orthogonal to sharpness and inferential robustness: empirical transport in high ambient dimension is statistically fragile.
It is well known that plug-in estimators of Wasserstein distances deteriorate rapidly with dimension without additional structure \citep{fournier2015rate,weed2019sharp}, motivating scalable surrogates and dimension-reduction principles.
Two prominent paradigms are:
(i) \emph{projection pursuit}, which searches for low-dimensional projections that preserve the dominant transport discrepancy \citep{paty2019subspace,lin2020prw}, and
(ii) \emph{slicing}, which averages one-dimensional transport costs over random directions to obtain computationally tractable quantities \citep{rabin2011wasserstein,bonneel2015sliced}.
A separate, structure-driven approach is the \emph{spiked transport model} of \citet{nilesweed2022spiked}, which shows that when transport energy concentrates on a low-dimensional subspace, Wasserstein distances can be estimated at intrinsic-dimensional rates.

In OT-based PI pipelines, one typically encounters optimization problems whose objective depends on transport discrepancies between (conditional) laws.
In high-dimensional regimes, a natural reduction is to keep only the \emph{dominant low-dimensional signal} by optimizing the projection term
$V\mapsto W_2^2(\mu^{V},\nu^{V})$ over $r$-dimensional subspaces, which aligns with projection-pursuit transport objectives and admits favorable estimation rates.
However, this projection-only strategy has an intrinsic limitation that is not merely computational: it yields a \emph{systematically loose lower certificate} whenever there remains non-negligible discrepancy on the orthogonal complement $V^\perp$.
In particular, even under benign residual geometries---e.g., approximately isotropic residual transport---the projection-only objective discards a component of the transport energy that can, in principle, be certified from one-dimensional projections.

The present paper adopts a different paradigm from both (i) direct estimation of conditional OT/COT functionals and (ii) projection-only dimension reduction.
We couple the two complementary projection principles in OT by considering, for each candidate subspace $V$,
\[
W_2^2(\mu^{V},\nu^{V})
\;+\;
k\,\mathrm{SW}_2^2(\mu^{V^\perp},\nu^{V^\perp}),
\qquad k=d-r,
\]
thereby retaining the low-dimensional transport signal while simultaneously certifying residual discrepancy through a scaled sliced term.
This augmentation is not an ad hoc variance-reduction device: $k\,\mathrm{SW}_2^2$ is itself a deterministic lower bound on $W_2^2$ on the residual space, and becomes near-tight under residual isotropy.
Consequently, our two-term objective strictly improves upon projection-only certificates whenever residual discrepancy is present, while preserving statistical scalability because the sliced component is built from one-dimensional OT computations.

\paragraph{Relation to existing causal-OT frameworks.}
Our results are complementary to the covariate-assisted PI literature cited above.
Whereas \citet{ji2024modelagnostic} emphasize inferential validity under minimal modeling and \citet{lin2025tightening,lin2025estimation} emphasize sharpness/computation and the topological regularity needed for consistent estimation of conditional transport quantities, we focus on the \emph{dimension-adaptive transport-analytic component} required by these pipelines.
Specifically, we provide (i) a population-level upper control of the certificate gap (a ``bias'' bound relative to the true quadratic transport radius), (ii) finite-sample error bounds for the plug-in implementation, and (iii) a structural theorem identifying the population-optimal subspace under an extended spiked transport model.
These components are designed to be modular: they can be inserted as conservative, dimension-adaptive primitives wherever conditional quadratic transport radii enter downstream partially identified causal optimization problems.


\section{Preliminaries}\label{sec:preliminaries}

For $d\ge 1$, let $\langle \cdot,\cdot\rangle$ and $\|\cdot\|$ denote the Euclidean inner product and norm on $\R^d$.
Let $\mathcal P_2(\R^d)$ be the set of Borel probability measures on $\R^d$ with finite second moment.
For a measurable map $T:\R^d\to\R^m$ and $\mu\in\mathcal P_2(\R^d)$, we write $T_{\#}\mu$ for the pushforward measure.
For a linear subspace $V\subset\R^d$, let $P_V$ denote the orthogonal projection onto $V$, and $V^\perp$ its orthogonal complement.
We write
\[
\mu^{V} := (P_V)_{\#}\mu,\qquad 
\mu^{V^\perp} := (P_{V^\perp})_{\#}\mu,
\]
for the projected and residual laws.  
Finally, let
$\mathcal V_{d,r} := \{\,V\subset\R^d:\ \dim(V)=r\,\}
$
denote the Grassmann manifold of $r$-dimensional linear subspaces.

\textbf{Quadratic Wasserstein distance and displacement geometry}\label{sec:w2}
For $\mu,\nu\in\mathcal P_2(\R^d)$, the squared quadratic Wasserstein distance is
\begin{equation}\label{eq:w2def}
W_2^2(\mu,\nu)
:= \inf_{\pi\in\Pi(\mu,\nu)}\int_{\R^d\times\R^d}\|x-y\|^2\,\mathrm d\pi(x,y),
\end{equation}
where $\Pi(\mu,\nu)$ is the set of couplings with marginals $(\mu,\nu)$.
Throughout, we work with the quadratic cost and emphasize its geometric interpretation.

When $\mu$ and $\nu$ are absolutely continuous with respect to Lebesgue measure, Brenier's theorem~\citep{brenier1991polar} ensures existence of an optimal transport map
$T:\R^d\to\R^d$ such that $T_{\#}\mu=\nu$ and $(\mathrm{Id},T)_{\#}\mu$ attains \eqref{eq:w2def}.
In this case,
\begin{equation}\label{eq:monge}
W_2^2(\mu,\nu)=\E_{X\sim\mu}\big[\|T(X)-X\|^2\big].
\end{equation}

Define the optimal displacement random vector $\Delta:=T(X)-X$ for $X\sim\mu$ and its second-moment matrix
\begin{equation}\label{eq:disp-matrix}
M(\mu,\nu) \;:=\; \E\big[\Delta\Delta^\top\big]\ \in\R^{d\times d}.
\end{equation}
Under \eqref{eq:monge}, $\mathrm{tr}\,M(\mu,\nu)=W_2^2(\mu,\nu)$.
The spectrum of $M(\mu,\nu)$ therefore quantifies how transport energy is distributed across orthogonal directions; this observation underlies the ``spiked'' transport structures used later. Since orthogonal projections are $1$-Lipschitz, for any $V\in\mathcal V_{d,r}$,
\begin{equation}\label{eq:proj-contraction}
W_2(\mu^{V},\nu^{V}) \vee
W_2(\mu^{V^\perp},\nu^{V^\perp})\le W_2(\mu,\nu).
\end{equation}
Here $a \vee b := \max\{a,b\}$. Moreover, the displacement matrix yields a convenient energy upper bound for projected discrepancies.

\begin{lemma}[Projected transport energy bound]\label{lem:proj-energy}
Assume $\mu,\nu$ are absolutely continuous and let $M(\mu,\nu)$ be as in \eqref{eq:disp-matrix}.
Then for any $V\in\mathcal V_{d,r}$,
$
W_2^2(\mu^{V},\nu^{V}) \;\le\; \mathrm{tr}\big(P_V\,M(\mu,\nu)\,P_V\big).
$
\end{lemma}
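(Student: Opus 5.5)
The plan is to build an explicit, generally suboptimal, coupling between $\mu^{V}$ and $\nu^{V}$ from the Brenier map $T$, and then bound its cost by a trace. Since $\mu$ is absolutely continuous, Brenier's theorem gives $T$ with $T_{\#}\mu=\nu$ and \eqref{eq:monge}. With $X\sim\mu$, I would take the pair $(P_V X,\,P_V T(X))$. Its first marginal is $(P_V)_{\#}\mu=\mu^{V}$. Its second marginal is $(P_V)_{\#}(T_{\#}\mu)=(P_V)_{\#}\nu=\nu^{V}$. So its law lies in $\Pi(\mu^{V},\nu^{V})$. Here $\mu^V$ is regarded as a measure on $V\cong\R^r$, or equivalently on $\R^d$ supported in $V$; the Euclidean cost is the same either way.

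By the definition \eqref{eq:w2def} as an infimum over couplings, this coupling gives
\[
W_2^2(\mu^{V},\nu^{V})\le \E\big[\|P_V T(X)-P_V X\|^2\big]=\E\big[\|P_V\Delta\|^2\big],
\]
where $\Delta=T(X)-X$ is the optimal displacement.

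Next comes the trace rewriting. Since $P_V$ is symmetric and idempotent,
\[
\|P_V\Delta\|^2=\Delta^\top P_V\Delta=\mathrm{tr}\big(P_V\Delta\Delta^\top P_V\big).
\]
Taking expectations and using linearity of trace and expectation gives $\E\|P_V\Delta\|^2=\mathrm{tr}\big(P_V M(\mu,\nu) P_V\big)$ with $M$ as in \eqref{eq:disp-matrix}. This is legitimate because $\E\|\Delta\|^2=W_2^2(\mu,\nu)<\infty$, since both measures lie in $\mathcal P_2$, so $M$ has finite entries.

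There is no serious obstacle; the argument is short. The only points needing care are measurability and integrability of the constructed coupling, which follow from $T$ being Borel and $\Delta\in L^2$. Absolute continuity of $\nu$ is not actually needed; only $\mu$'s is needed, to guarantee the map exists.
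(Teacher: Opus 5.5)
Your proposal is correct and follows the paper's proof essentially verbatim: you push the Brenier coupling $(X,T(X))$ forward under $P_V$ to get a coupling of $\mu^V$ and $\nu^V$, bound $W_2^2(\mu^V,\nu^V)$ by $\E\|P_V\Delta\|^2$, and rewrite this as $\mathrm{tr}(P_V M P_V)$. Your remarks on integrability, and on needing absolute continuity only for $\mu$, are accurate additions that the paper leaves implicit.
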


\textbf{Projection pursuit and the WPP functional}\label{sec:wpp}
High-dimensional plug-in estimators of $W_p$ suffer from the curse of dimensionality.
Projection pursuit addresses this by searching for low-dimensional views on which the transport discrepancy concentrates. For $p\ge 1$ and $r\in\{1,\dots,d\}$, define the \emph{projection-pursuit Wasserstein functional}
\begin{equation}\label{eq:wpp-functional}
\widetilde W_{p,r}(\mu,\nu)
\;:=\;
\sup_{V\in\mathcal V_{d,r}} W_p(\mu^{V},\nu^{V}).
\end{equation}
The corresponding empirical estimator (Wasserstein projection pursuit, WPP) replaces $(\mu,\nu)$ with empirical laws, as in \citep{nilesweed2022spiked}, and Proposition~\ref{prop:wpp-rate} recalls its
finite-sample behavior.

\begin{proposition}[Typical finite-sample behavior of WPP]\label{prop:wpp-rate}
Assume $p\in[1,2]$ and that $\mu,\nu$ satisfy a transport inequality $T_p(\sigma^2)$.
Let $\mu_n,\nu_n$ be empirical measures based on $n$ i.i.d.\ samples from $\mu$ and $\nu$, and define
$\widehat W_{p,r}:=\widetilde W_{p,r}(\mu_n,\nu_n)$.
Then there exists $c_r>0$ (depending on $p,r$ but not on $d,n$) such that
\begin{equation}\label{eq:wpp-rate}
\E\Big[\big|\widehat W_{p,r}-\widetilde W_{p,r}(\mu,\nu)\big|\Big]
\;\le\;
c_r\,\sigma\Big(r_{p,r}(n)+\sqrt{\tfrac{d\log n}{n}}\Big),
\end{equation}
where~$
r_{p,r}(n):= c_p\sqrt{r}\times
\begin{cases}
n^{-1/(2p)}, & r<2p,\\[2pt]
n^{-1/(2p)}(\log n)^{1/p}, & r=2p,\\[2pt]
n^{-1/r}, & r>2p.
\end{cases}
$
\end{proposition}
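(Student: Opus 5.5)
The plan is to reduce the claim to a uniform empirical-process bound over the Grassmannian, following the strategy of \citet{nilesweed2022spiked}. Since $\widehat W_{p,r}$ and $\widetilde W_{p,r}(\mu,\nu)$ are suprema of the same functional over $\mathcal V_{d,r}$, evaluated at $(\mu_n,\nu_n)$ and $(\mu,\nu)$, we start from
\[
\big|\widehat W_{p,r}-\widetilde W_{p,r}(\mu,\nu)\big|\le \sup_{V\in\mathcal V_{d,r}}\big|W_p(\mu_n^V,\nu_n^V)-W_p(\mu^V,\nu^V)\big|.
\]
The triangle inequality for $W_p$ then bounds the right-hand side by
\[
\sup_V W_p(\mu_n^V,\mu^V)+\sup_V W_p(\nu_n^V,\nu^V).
\]
So it suffices to control $\E\sup_{V} W_p(\mu_n^V,\mu^V)$, and likewise for $\nu$. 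We do this in two steps: concentration around the mean, and a bound on the mean via a net.

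\textbf{Concentration.} The map $(x_1,\dots,x_n)\mapsto \sup_V W_p(\mu_n^V,\mu^V)$ is Lipschitz with constant $n^{-1/p}$, hence at most $n^{-1/2}$ for $p\le 2$, with respect to the $\ell_2$-type product metric. This holds because projections are $1$-Lipschitz and empirical $W_p$ is Lipschitz in the atoms. The $T_p(\sigma^2)$ inequality tensorizes for $p\le 2$ (Gozlan--Léonard), so we get sub-Gaussian concentration with variance proxy $\sigma^2/n$. This is used together with the net argument below to pay only a $\log$-cardinality price.

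\textbf{Mean bound via a net.} Fix $V$. Then $\mu^V$ is a measure on an $r$-dimensional space, and it inherits $T_p(\sigma^2)$ because $P_V$ is $1$-Lipschitz. The standard rates of \citet{fournier2015rate} and \citet{weed2019sharp}, combined with moment control from the transport inequality, give
\[
\E W_p(\mu_n^V,\mu^V)\lesssim \sigma\, r_{p,r}(n),
\]
with the three regimes $r<2p$, $r=2p$ and $r>2p$ matching the definition of $r_{p,r}(n)$. Next take an $\varepsilon$-net $\mathcal N$ of $\mathcal V_{d,r}$ in operator norm on projections, of cardinality at most $(C/\varepsilon)^{r(d-r)}$. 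For $V,V'$ with $\|P_V-P_{V'}\|\le\varepsilon$ we have
\[
W_p(\mu^V,\mu^{V'})\le \varepsilon\,(\E\|X\|^p)^{1/p},
\]
and the same bound holds for $\mu_n$. Second moments are controlled by $T_p$, giving a $\sigma\sqrt d$-type scale for $\|X\|$ after centering. Concentration at each net point plus a union bound gives
\[
\E\max_{V\in\mathcal N}W_p(\mu_n^V,\mu^V)\le \max_V\E W_p(\mu_n^V,\mu^V)+C\sigma\sqrt{\frac{r d\log(1/\varepsilon)}{n}}.
\]
Choosing $\varepsilon\asymp n^{-1/2}$ makes the discretization error negligible, and the union-bound term becomes $\sigma\sqrt{d\log n/n}$ up to an $r$-dependent constant absorbed into $c_r$.

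\textbf{Main obstacle.} The hardest step is the discretization term. The naive bound $\varepsilon\,\E\|X\|$ scales like $\varepsilon\sigma\sqrt d$ (or with the mean norm), and it must be kept below the target rate uniformly in $d$. I would handle this by centering, which is legitimate because the empirical-to-population error is translation-equivariant under a common shift. The choice $\varepsilon=n^{-1/2}$ then yields an error of order $\sigma\sqrt{d/n}$, which is dominated by $\sigma\sqrt{d\log n/n}$. A secondary subtlety is that the exponent in the union bound must come out as $\sqrt{d\log n/n}$ rather than $n^{-1/(2p)}$. This is exactly why the dimension-free $n^{-1/2}$ Lipschitz constant, valid for $p\le2$, is needed in the concentration step.
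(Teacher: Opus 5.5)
The paper gives no proof of this proposition: it is recalled from Niles-Weed and Rigollet. Your argument is a correct reconstruction of the proof in that source: reduction to $\E\sup_{V}W_p(\mu_n^V,\mu^V)$, per-subspace Fournier--Guillin rates in dimension $r$ with moments controlled by $T_p$, and an $n^{-1/2}$-net of $\mathcal V_{d,r}$ with a union bound, with centering used to keep the discretization error at $\sigma\sqrt{d/n}$.

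The one slip is in the concentration step. The $n^{-1/p}$ Lipschitz constant holds in the $\ell_p$-sum metric, so it must be paired with the Gozlan--L\'eonard tensorized inequality $T_p(n^{2/p-1}\sigma^2)$ for that metric, which yields the same $\sigma^2/n$ proxy. Your $n^{-1/2}$ constant in the $\ell_2$-sum metric would give dimension-free concentration only under $T_2$.
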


\textbf{Sliced Wasserstein distance}
The sliced Wasserstein distance averages one-dimensional transport costs over random directions.
For $p\ge1$ and $\alpha,\beta\in\mathcal P_p(\R^k)$, define
\begin{equation}\label{eq:sw-def}
\mathrm{SW}_p^p(\alpha,\beta)
:=
\int_{\mathbb S^{k-1}} W_p^p\big((\pi_\theta)_{\#}\alpha,(\pi_\theta)_{\#}\beta\big)\,\mathrm d\sigma_k(\theta),
\end{equation}
where $\pi_\theta(x):=\langle \theta,x\rangle$, and $\sigma_k$ is the Haar probability measure on $\mathbb S^{k-1}$. The following comparison is the key inequality we exploit on high-dimensional residual subspaces.

\begin{proposition}[Sliced lower bound for quadratic transport]\label{prop:sw-lb}
Let $k\ge 1$ and $\alpha,\beta\in\mathcal P_2(\R^k)$. Then
$
k\,\mathrm{SW}_2^2(\alpha,\beta)\ \le\ W_2^2(\alpha,\beta).
$
\end{proposition}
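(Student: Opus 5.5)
The plan is to build a coupling for each one-dimensional slice from a single optimal coupling of $(\alpha,\beta)$, and then average over directions.

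Fix an optimal coupling $\pi^\star\in\Pi(\alpha,\beta)$ attaining $W_2^2(\alpha,\beta)=\int\|x-y\|^2\,\mathrm d\pi^\star(x,y)$. Existence follows by the usual weak-compactness argument for couplings of measures in $\mathcal P_2$; alternatively, one can take an $\varepsilon$-optimal coupling and let $\varepsilon\to0$ at the end. No absolute continuity is needed.

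For each $\theta\in\mathbb S^{k-1}$, the pushforward $(\pi_\theta\times\pi_\theta)_{\#}\pi^\star$ is a coupling of $(\pi_\theta)_{\#}\alpha$ and $(\pi_\theta)_{\#}\beta$. It is therefore admissible in the one-dimensional problem, which gives
\[
W_2^2\big((\pi_\theta)_{\#}\alpha,(\pi_\theta)_{\#}\beta\big)\le\int\langle\theta,x-y\rangle^2\,\mathrm d\pi^\star(x,y).
\]

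Next, integrate this inequality in $\theta$ against $\sigma_k$. Tonelli's theorem applies because the integrand is nonnegative, so the order of integration can be exchanged. The only computation is the isotropy identity $\int_{\mathbb S^{k-1}}\theta\theta^\top\,\mathrm d\sigma_k(\theta)=\tfrac1k I_k$. It holds because the left-hand side is invariant under orthogonal conjugation, which forces it to be a multiple of $I_k$, and its trace equals $1$. Consequently $\int\langle\theta,z\rangle^2\,\mathrm d\sigma_k(\theta)=\|z\|^2/k$ for every $z\in\R^k$.

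Combining these steps,
\[
\mathrm{SW}_2^2(\alpha,\beta)\le\frac1k\int\|x-y\|^2\,\mathrm d\pi^\star=\frac1k W_2^2(\alpha,\beta),
\]
and multiplying by $k$ gives the claim.

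The argument has no real obstacle. The points that need care are measurability of $\theta\mapsto W_2^2$ of the projected laws, which holds because this map is continuous in $\theta$, and the isotropy identity.
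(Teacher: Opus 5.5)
Your proof is correct and follows essentially the same route as the paper: push a single optimal coupling forward to each slice to bound the one-dimensional cost by $\int\langle\theta,x-y\rangle^2\,\mathrm d\pi^\star$, then average over $\theta$ using the isotropy identity $\int\theta\theta^\top\,\mathrm d\sigma_k=\tfrac1k I_k$ (the paper's Lemma~\ref{lem:sphere2}). Your additional remarks on existence of the optimal coupling, the use of Tonelli, and measurability of $\theta\mapsto W_2^2$ are correct and fill in details the paper leaves implicit.
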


Such inequality becomes tight when the optimal displacement on $\R^k$ is directionally isotropic, so that the average one-dimensional squared displacement recovers $\|\,\Delta\,\|^2/k$ without loss.
This ``isotropy of the residual'' will be the geometric condition under which our main lower certificate is near-tight.

\section{Theoretical Analysis: Subspace--Slicing Certificates}\label{sec:theory}

This section develops the population and finite-sample theory for the conditioned subspace--slicing (CSS) certificate. Our goal is twofold.
First, we establish that the two-term objective
$
V\ \mapsto\ W_2^2(\mu^{V},\nu^{V}) \;+\; k\,\mathrm{SW}_2^2(\mu^{V^\perp},\nu^{V^\perp})
$
defines a \emph{certified} lower bound on the true quadratic transport radius, and strictly improves the projection-only baseline whenever residual discrepancy is present.
Second, we characterize when this lower certificate is \emph{near-tight} and provide a rigorous theorem identifying (and stabilizing) the population-optimal subspace.

\subsection{A Certified Two-Term Lower Bound}\label{sec:theory:certificate}

We begin with the strongest deterministic statement in the paper.
It shows that the \emph{sum} of a low-dimensional OT term and a scaled residual sliced term remains a valid lower bound on the full quadratic Wasserstein distance, both conditionally and unconditionally.

\begin{theorem}[lower certificate: conditional and unconditional]\label{thm:css-certificate}
Fix $r\in\{1,\dots,d-1\}$ and $k=d-r$.
For each $z$ and each $V\in\mathcal V_{d,r}$, define the CSS objective
\[
\mathcal L_z(V)
:=
W_2^2\big(\mu^{V}_{1\mid z},\mu^{V}_{0\mid z}\big)
\;+\;
k\,\mathrm{SW}_2^2\big(\mu^{V^\perp}_{1\mid z},\mu^{V^\perp}_{0\mid z}\big),
\]
and let $\mathrm{LB}^\star(z)=\sup_{V\in\mathcal V_{d,r}}\mathcal L_z(V)$.
Then:
\begin{align}
W_2^2\big(\mu_{1\mid z},\mu_{0\mid z}\big)\ &\ge\ \mathrm{LB}^\star(z)
\qquad \text{for all }z,\label{eq:css-sup}
\end{align}
and hence $W_2^2(\mu_1,\mu_0)\ \ge\ \E\big[\mathrm{LB}^\star(Z)\big]$.
\end{theorem}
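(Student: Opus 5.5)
The plan is to fix $z$ and an arbitrary $V\in\mathcal V_{d,r}$, show $\mathcal L_z(V)\le W_2^2(\mu_{1\mid z},\mu_{0\mid z})$, and then take the supremum over $V$. The key point is that the quadratic cost splits orthogonally: for $x,y\in\R^d$,
\[
\|x-y\|^2=\|P_V x-P_V y\|^2+\|P_{V^\perp}x-P_{V^\perp}y\|^2 .
\]

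First, let $\pi$ be an optimal coupling of $(\mu_{1\mid z},\mu_{0\mid z})$; it exists because both laws lie in $\mathcal P_2$. Push $\pi$ forward under $(x,y)\mapsto(P_Vx,P_Vy)$ to obtain a coupling $\pi^V\in\Pi(\mu^V_{1\mid z},\mu^V_{0\mid z})$. Likewise, $P_{V^\perp}$ gives $\pi^{V^\perp}\in\Pi(\mu^{V^\perp}_{1\mid z},\mu^{V^\perp}_{0\mid z})$. Integrating the orthogonal split against $\pi$ yields
\[
W_2^2(\mu_{1\mid z},\mu_{0\mid z})=\int\|P_Vx-P_Vy\|^2\,d\pi+\int\|P_{V^\perp}x-P_{V^\perp}y\|^2\,d\pi
\ge W_2^2(\mu^V_{1\mid z},\mu^V_{0\mid z})+W_2^2(\mu^{V^\perp}_{1\mid z},\mu^{V^\perp}_{0\mid z}).
\]
This is a superadditivity strengthening of the contraction \eqref{eq:proj-contraction}. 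It also avoids Brenier maps, so no absolute continuity is needed.

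Second, identify $V^\perp$ with $\R^k$ through an orthonormal basis. Since $\mathrm{SW}_2$ is invariant under the choice of isometry, Proposition~\ref{prop:sw-lb} gives
\[
k\,\mathrm{SW}_2^2(\mu^{V^\perp}_{1\mid z},\mu^{V^\perp}_{0\mid z})\le W_2^2(\mu^{V^\perp}_{1\mid z},\mu^{V^\perp}_{0\mid z}).
\]
For completeness, this follows by pushing the optimal residual coupling along $\pi_\theta$ and averaging. One uses $\int_{\mathbb S^{k-1}}\langle\theta,u\rangle^2\,d\sigma_k(\theta)=\|u\|^2/k$ and then Fubini. Substituting into the first display gives $\mathcal L_z(V)\le W_2^2(\mu_{1\mid z},\mu_{0\mid z})$ for every $V$, and taking the supremum gives \eqref{eq:css-sup}.

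Third, the unconditional statement needs $W_2^2(\mu_1,\mu_0)\ge\E[W_2^2(\mu_{1\mid Z},\mu_{0\mid Z})]$, which I expect to be the main obstacle. This inequality is not true for arbitrary laws, since mixing can reduce the distance. The interpretation I will adopt, which is implicit in the conditional setting where $\mu_{a\mid z}$ share the covariate law of $Z$, is that $\mu_a$ denotes the joint law of $(Z,Y(a))$, or equivalently that couplings are required to preserve $Z$. Then the left side is the adapted or conditional transport cost, and it equals $\int W_2^2(\mu_{1\mid z},\mu_{0\mid z})\,d P_Z(z)$. The $\ge$ direction holds because any $Z$-preserving coupling disintegrates into couplings of the conditionals. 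If instead the paper intends plain marginals, I would state the result for the covariate-constrained transport radius used in the PI pipeline. Measurability of $z\mapsto\mathrm{LB}^\star(z)$ also needs care. I would handle it by noting that $V\mapsto\mathcal L_z(V)$ is continuous on the compact Grassmannian, so the supremum can be taken over a countable dense set. Alternatively, one can interpret $\E$ as an outer expectation. Combining these steps with \eqref{eq:css-sup} and integrating over $Z$ completes the argument.
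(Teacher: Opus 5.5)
Your proposal is correct and follows the paper's own argument: the Pythagorean split of the quadratic cost under a single coupling (the paper's orthogonal decomposition lemma), Proposition~\ref{prop:sw-lb} applied on $V^\perp\simeq\R^k$, and then the supremum over $V$. For the unconditional part, the paper likewise proves only $\E[W_2^2(\mu_{1\mid Z},\mu_{0\mid Z})]\ge\E[\mathrm{LB}^\star(Z)]$, and remarks that the literal inequality for the marginals $W_2^2(\mu_1,\mu_0)$ can fail unless an optimal plan preserves the $Z$-strata; this is the same caveat and $Z$-preserving reinterpretation you adopt, and your measurability remark is a detail the paper omits.
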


Inequality \eqref{eq:css-sup} certifies that \emph{every} candidate subspace $V$ yields a valid two-term lower bound on the conditional transport radius.
Taking the supremum produces the best certificate within the class.
From the perspective of partial identification, Theorem~\ref{thm:css-certificate} delivers a dimension-adaptive transport primitive that can be safely plugged into downstream PI optimization programs without violating validity. It does not rely on a parametric model for conditional distributions.
It is a deterministic inequality holding at the level of couplings.
In particular, it does not assert that $\mathrm{LB}^\star(z)$ is sharp; sharpness is addressed next by structural conditions.

A potential objection is that ``adding a second nonnegative term'' might look tautological.
The key point is that one cannot add an arbitrary residual term without breaking certification.
The residual term must itself be a valid lower bound for the residual Wasserstein distance on $V^\perp$; moreover, the scaling by $k$ is essential.
Without the scaling, $\mathrm{SW}_2^2$ typically decays as $d^{-1}$ in high dimension and becomes vacuous; Theorem~\ref{thm:css-certificate} shows that the \emph{properly scaled} sliced term provides a nontrivial, certified recovery of residual transport energy.


\subsection{Population Bias Control via Residual Anisotropy}\label{sec:theory:bias}

Theorem~\ref{thm:css-certificate} guarantees validity but does not quantify tightness.
We demonstrate under a spiked transport geometry, the only source of looseness is a single residual anisotropy deficit. For $k\ge 1$ and $\alpha,\beta\in\mathcal P_2(\R^k)$, define
\begin{equation}\label{eq:deficit}
\mathsf{Def}_k(\alpha,\beta)
\;:=\;
W_2^2(\alpha,\beta)\;-\;k\,\mathrm{SW}_2^2(\alpha,\beta)\ \ge\ 0.
\end{equation}
This deficit vanishes when residual transport is directionally isotropic, which will be illustrated further. The next lemma isolates the regime in which the transport problem genuinely decomposes into a low-dimensional ``signal'' and a residual, serving as the population-level backbone for our bias upper control.

\begin{lemma}[Additivity under orthogonal product structure]\label{lem:additivity}
Let $U\in\mathcal V_{d,r}$ and write $k=d-r$.
Suppose $\mu_a=\mu_{a}^U\otimes \mu_a^{U^\perp}$ for $a\in\{0,1\}$, where
$\mu_a^U\in\mathcal P_2(U)$ and $\mu_a^{U\perp}\in\mathcal P_2(U^\perp)$.
Here $\otimes$ denotes the \emph{product measure} induced by the orthogonal decomposition
$\mathbb R^d=U\oplus U^\perp$.
Then
\[
W_2^2(\mu_1,\mu_0)=W_2^2(\mu_1^U,\mu_0^U)+W_2^2(\mu_1^{U^\perp},\mu_0^{U^\perp}).
\]
The same statement holds conditionally for $\{\mu_{1|z},\mu_{0|z}\}$ whenever the product structure holds at each $z$.
\end{lemma}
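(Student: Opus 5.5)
We prove the identity by matching upper and lower bounds on $W_2^2(\mu_1,\mu_0)$, using the Pythagorean splitting of the quadratic cost along $\mathbb R^d=U\oplus U^\perp$. For any $x,y\in\mathbb R^d$ we have
\[
\|x-y\|^2=\|P_U x-P_U y\|^2+\|P_{U^\perp}x-P_{U^\perp}y\|^2 .
\]
So for any coupling $\pi\in\Pi(\mu_1,\mu_0)$ the cost $\int\|x-y\|^2\,\mathrm d\pi$ splits into a $U$-part and a $U^\perp$-part. Note that $\mu_a^U$ and $\mu_a^{U^\perp}$ coincide with the marginals $(P_U)_\#\mu_a$ and $(P_{U^\perp})_\#\mu_a$, which is consistent with the notation in Section~\ref{sec:preliminaries}.

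\emph{Lower bound.} Take any $\pi\in\Pi(\mu_1,\mu_0)$. Its pushforward under $(x,y)\mapsto(P_Ux,P_Uy)$ is a coupling of $\mu_1^U$ and $\mu_0^U$. Likewise its pushforward under $(x,y)\mapsto(P_{U^\perp}x,P_{U^\perp}y)$ is a coupling of $\mu_1^{U^\perp}$ and $\mu_0^{U^\perp}$. Hence the $U$-part of the cost is at least $W_2^2(\mu_1^U,\mu_0^U)$ and the $U^\perp$-part is at least $W_2^2(\mu_1^{U^\perp},\mu_0^{U^\perp})$. Taking the infimum over $\pi$ gives ``$\ge$''. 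This direction does not use the product structure.

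\emph{Upper bound.} Optimal couplings exist for $\mathcal P_2$ measures, by tightness of the coupling set and lower semicontinuity of the cost. Let $\pi^U$ be optimal for $(\mu_1^U,\mu_0^U)$ on $U\times U$, and $\pi^\perp$ optimal for $(\mu_1^{U^\perp},\mu_0^{U^\perp})$ on $U^\perp\times U^\perp$. Define $\pi$ as the law of $(x_U+x_\perp,\;y_U+y_\perp)$, where $(x_U,y_U)\sim\pi^U$ and $(x_\perp,y_\perp)\sim\pi^\perp$ are drawn independently. The first marginal of $\pi$ is the law of $x_U+x_\perp$ with $x_U\sim\mu_1^U$ and $x_\perp\sim\mu_1^{U^\perp}$ independent. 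This is exactly $\mu_1^U\otimes\mu_1^{U^\perp}=\mu_1$, and this is the one place the product hypothesis is used. The same argument shows the second marginal is $\mu_0$. By the Pythagorean identity, the cost of $\pi$ equals $W_2^2(\mu_1^U,\mu_0^U)+W_2^2(\mu_1^{U^\perp},\mu_0^{U^\perp})$, which gives ``$\le$''.

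\emph{Conditional version.} Apply the same argument for each fixed $z$ to $\mu_{1|z},\mu_{0|z}$. No measurability in $z$ is needed, since the claim is pointwise in $z$.

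\emph{Main obstacle.} The only delicate step is the marginal check in the upper bound, namely identifying $\mu_a$ with the independent sum. If one prefers to avoid invoking existence of optimal couplings, take $\varepsilon$-optimal couplings instead and let $\varepsilon\to0$.
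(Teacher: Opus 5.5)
Your proof is correct and follows essentially the same route as the paper: a lower bound from projecting an arbitrary coupling onto the $U$ and $U^\perp$ factors, and an upper bound from the independent (product) combination of the two optimal couplings, with the product hypothesis used only to check the marginals. The only difference is cosmetic: the paper first transfers the problem to $U\times U^\perp$ via the isometry $(x,z)\mapsto x+z$, whereas you work directly in $\mathbb R^d$ with the projectors $P_U,P_{U^\perp}$.
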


We can now upper bound the population certificate gap $\Delta(z) \;:=\; W_2^2(\mu_{1\mid z},\mu_{0\mid z}) - \mathrm{LB}^\star(z)\ \ge\ 0$ by the residual anisotropy deficit.
This is the population-level ``bias upper control'' promised in the introduction.

\begin{theorem}[Population bias upper control under spiked geometry]\label{thm:bias-control}
Assume the conditional measures satisfy the product decomposition in Lemma~\ref{lem:additivity} with signal subspace $U(z)\in\mathcal V_{d,r}$, and write $k(z)=d-r$.
Then for each $z$,
$
0\ \le\ 
\Delta(z)
:= W_2^2(\mu_{1\mid z},\mu_{0\mid z})-\mathrm{LB}^\star(z)
\ \le\
\mathsf{Def}_{k(z)}\big(\mu^{U(z)^\perp}_{1\mid z},\mu^{U(z)^\perp}_{0\mid z}\big).
$
In particular, if the residual discrepancy is directionally isotropic so that $\mathsf{Def}_{k(z)}(\cdot,\cdot)=0$, then the CSS certificate is sharp at stratum $z$, i.e.\ $\Delta(z)=0$.
\end{theorem}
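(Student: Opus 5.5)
The lower inequality $\Delta(z)\ge 0$ is exactly Theorem~\ref{thm:css-certificate}, so only the upper bound needs an argument. The plan is to lower-bound the supremum $\mathrm{LB}^\star(z)$ by a single explicit candidate, namely the signal subspace itself, $V=U(z)$. Since $\mathrm{LB}^\star(z)\ge \mathcal L_z(U(z))$, we get
\[
\Delta(z)\ \le\ W_2^2(\mu_{1\mid z},\mu_{0\mid z})-\mathcal L_z(U(z)).
\]
Everything now reduces to evaluating the right-hand side at this one subspace.

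Fix $z$ and abbreviate $U=U(z)$ and $k=k(z)=d-r$. By the product structure and Lemma~\ref{lem:additivity} applied at stratum $z$,
\[
W_2^2(\mu_{1\mid z},\mu_{0\mid z})=W_2^2(\mu^{U}_{1\mid z},\mu^{U}_{0\mid z})+W_2^2(\mu^{U^\perp}_{1\mid z},\mu^{U^\perp}_{0\mid z}).
\]
We must also check that the factors in the product decomposition coincide with the projected laws $(P_U)_{\#}\mu_{a\mid z}$ and $(P_{U^\perp})_{\#}\mu_{a\mid z}$ appearing in $\mathcal L_z$. This holds because the marginals of a product measure under the coordinate projections of $U\oplus U^\perp$ are its factors.

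Substituting into the definition of $\mathcal L_z(U)$, the signal terms $W_2^2(\mu^{U}_{1\mid z},\mu^{U}_{0\mid z})$ cancel. What remains is
\[
W_2^2(\mu^{U^\perp}_{1\mid z},\mu^{U^\perp}_{0\mid z})-k\,\mathrm{SW}_2^2(\mu^{U^\perp}_{1\mid z},\mu^{U^\perp}_{0\mid z}),
\]
which is precisely $\mathsf{Def}_{k}$ by \eqref{eq:deficit}. Here $U^\perp$ is identified isometrically with $\R^k$. The sliced distance is invariant under this identification, because the Haar measure on the unit sphere of $U^\perp$ corresponds to $\sigma_k$. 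Combining this with the displayed inequality gives the claim.

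For the sharpness statement: if $\mathsf{Def}_k=0$, then $0\le\Delta(z)\le 0$, so $\Delta(z)=0$.

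The only delicate step is the bookkeeping just described: identifying the product factors with the pushforwards under $P_U$ and $P_{U^\perp}$, and making sure the sliced term on $U^\perp$ matches the $\R^k$ definition. Beyond that, no optimization over $V$ is needed, since restricting the supremum to one candidate only weakens it in the harmless direction.
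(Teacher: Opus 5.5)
Your proposal is correct and follows essentially the same argument as the paper: bound $\mathrm{LB}^\star(z)$ below by $\mathcal L_z(U(z))$, apply Lemma~\ref{lem:additivity} so the signal terms cancel and exactly the deficit remains, and take nonnegativity from Theorem~\ref{thm:css-certificate}. Your extra bookkeeping is left implicit in the paper but is correct; this covers identifying the product factors with the pushforwards under $P_U$ and $P_{U^\perp}$, and the isometric identification $U^\perp\simeq\R^k$ for the sliced term.
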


Theorem~\ref{thm:bias-control} says: {once the dominant transport discrepancy is captured on a low-dimensional subspace, the only reason our lower bound is not exact is that the residual transport is anisotropic}.
In applications, a typical example is a learned embedding $X$ where treatment/control differences concentrate along a small number of ``causal'' directions, while the remaining coordinates behave like approximately whitened noise.
In this regime, the residual transport is close to isotropic, hence $\mathsf{Def}_k$ is small and the two-term certificate essentially recovers the full transport energy.

Theorem~\ref{thm:bias-control} does \emph{not} assume that the residual discrepancy is zero.
Rather, it shows that residual discrepancy can be \emph{recovered} from one-dimensional projections up to an anisotropy deficit.
This is precisely where projection-only methods fail: they discard the residual entirely, even in the benign isotropic regime where it is recoverable at almost no statistical cost.

We next give an interpretable sufficient condition that makes $\mathsf{Def}_k$ small.
The condition is expressed through the local spectral anisotropy of the residual Brenier map and leads to a dimension-adaptive, non-asymptotic bias control.

Let $T_\perp=\nabla\phi$ be the Brenier map from
$\mu_{1|z}^{U(z)^\perp}$ to $\mu_{0|z}^{U(z)^\perp}$, and let
$J(x)=\nabla T_\perp(x)$.
For $X\sim\mu_{1|z}^{U(z)^\perp}$, define
\begin{align*}
    \delta(J(x))
    &:= 1-\frac{\mathrm{tr}(J(x))^2}{k(z)\,\mathrm{tr}(J(x)^2)}
    \in [0,1-1/k(z)],   \\
    \bar\delta
    &:= \mathbb E[\delta(J(X))].
\end{align*}
The quantity $\delta(J(x))$ measures the local directional anisotropy of the
residual Brenier map. It equals zero when $J(x)$ is proportional to the identity,
and becomes larger when the residual transport stretches different directions
unevenly. Thus $\bar\delta$ summarizes the average anisotropy of the residual
transport over $\mu_{1|z}^{U(z)^\perp}$.
\begin{theorem}[Residual anisotropy implies small deficit]
\label{thm:deficit-quant}
Suppose the residual pair
$\bigl(\mu_{1|z}^{U(z)^\perp},\mu_{0|z}^{U(z)^\perp}\bigr)$
satisfies Assumption~\ref{assump:regular-residual-transport}. There exists a constant $C$ depending only on the regularity constants in the assumption such that
$$
    \frac{\mathsf{Def}_{k(z)}
    \bigl(
        \mu_{1|z}^{U(z)^\perp},
        \mu_{0|z}^{U(z)^\perp}
    \bigr)}{W_2^2
    \bigl(
        \mu_{1|z}^{U(z)^\perp},
        \mu_{0|z}^{U(z)^\perp}
    \bigr)}
    \le
    C\sqrt{\bar{\delta}} .
$$
Consequently, the CSS certificate is near-tight whenever the residual Brenier map is close to directionally isotropic.
\end{theorem}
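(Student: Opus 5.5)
The plan is a stability argument around the exactly isotropic case. First the rigid case. If $J(x)=\nabla T_\perp(x)$ is a scalar multiple of the identity everywhere, then, since $J=\nabla^2\phi$ is a symmetric Hessian and $k(z)\ge 2$, the scalar must be a constant $c\ge 0$. Hence $T_\perp(x)=cx+b$ is affine. For every $\theta\in\mathbb S^{k-1}$ the induced one-dimensional map $t\mapsto ct+\langle\theta,b\rangle$ is monotone, hence optimal between the projected laws. This gives $W_2^2((\pi_\theta)_\#\alpha,(\pi_\theta)_\#\beta)=\E\langle\theta,\Delta\rangle^2$ with $\Delta=T_\perp(X)-X$. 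Averaging over $\sigma_k$ and using $\int\theta\theta^\top d\sigma_k=I/k$ gives $k\,\mathrm{SW}_2^2=\E\|\Delta\|^2=W_2^2$, so $\mathsf{Def}_k=0$. The general proof perturbs this: replace $T_\perp$ by a nearby affine isotropic map $A(x)=\bar c\,x+b$ and control the error by $\bar\delta$.

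\textbf{Step 1 (comparison of slices with an affine surrogate).} Write $\alpha=\mu_{1|z}^{U(z)^\perp}$, $\beta=\mu_{0|z}^{U(z)^\perp}=(T_\perp)_\#\alpha$, and $\varepsilon:=\|T_\perp-A\|_{L^2(\alpha)}$. For each $\theta$, use the triangle inequality in one-dimensional $W_2$. The first ingredient is the exact identity $W_2((\pi_\theta)_\#\alpha,(\pi_\theta\circ A)_\#\alpha)=\|\langle\theta,A(X)-X\rangle\|_{L^2}$, which holds by monotonicity of the affine map when $\bar c\ge 0$. The second is the coupling bound $W_2((\pi_\theta\circ A)_\#\alpha,(\pi_\theta)_\#\beta)\le\|\langle\theta,T_\perp(X)-A(X)\rangle\|_{L^2}$. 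Squaring the resulting lower bound, integrating over $\theta$ and applying Cauchy--Schwarz in $\theta$ yields
\[
k\,\mathrm{SW}_2^2(\alpha,\beta)\ \ge\ \|A-\mathrm{Id}\|_{L^2(\alpha)}^2-2\|A-\mathrm{Id}\|_{L^2(\alpha)}\,\varepsilon .
\]
Since $\|A-\mathrm{Id}\|_{L^2}\ge W_2(\alpha,\beta)-\varepsilon$ and $\|A-\mathrm{Id}\|_{L^2}\le W_2+\varepsilon$, we obtain $\mathsf{Def}_k\le C\,W_2(\alpha,\beta)\,\varepsilon+C\varepsilon^2$. It therefore suffices to prove $\varepsilon\le C\sqrt{\bar\delta}\,W_2(\alpha,\beta)$.

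\textbf{Step 2 (from anisotropy to $L^2$ distance to an affine map).} Pointwise, with $c(x)=\mathrm{tr}J(x)/k$, we have the identity $\|J(x)-c(x)I\|_F^2=\delta(J(x))\,\mathrm{tr}(J(x)^2)$. Assuming Assumption~\ref{assump:regular-residual-transport} supplies uniform spectral bounds $\lambda I\preceq J\preceq\Lambda I$, this gives $\E\|J-cI\|_F^2\le k\Lambda^2\bar\delta$. Next choose $\bar c=\E c(X)$ and $b$ so that $T_\perp-A$ has mean zero. A Poincaré inequality for $\alpha$, which I expect the assumption to provide with some constant $C_P$, then gives $\varepsilon^2\le C_P\,\E\|J-\bar cI\|_F^2$.

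\textbf{Step 3 (main obstacle).} The difficulty is that $\E\|J-\bar cI\|_F^2$ also contains the fluctuation term $k\,\mathrm{Var}(c(X))$, which $\delta$ alone does not see. I would control it through the curl-free structure of $J=\nabla^2\phi$. Write $J=cI+E$ with $E$ the trace-free part. The compatibility condition $\partial_i J_{jl}=\partial_j J_{il}$ gives $(k-1)\nabla c=-\big(\mathrm{div}E-\nabla\mathrm{tr}E\big)$ up to sign conventions, i.e. $\nabla c$ is a divergence of $E$. In a weak $H^{-1}$ sense with respect to $\alpha$, using the regularity and density bounds in the assumption, this yields $\mathrm{Var}(c)\le C\,\E\|E\|_F^2\le Ck\Lambda^2\bar\delta$. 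If this route proves too delicate, the fallback is to invoke whatever bound on $\nabla J$ the assumption provides.

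Finally, a nondegeneracy condition from the assumption of the form $W_2^2(\alpha,\beta)\ge c_0k$ converts $\varepsilon^2\le Ck\bar\delta$ into $\varepsilon\le C\sqrt{\bar\delta}\,W_2$. Plugging this into Step 1, and using $\bar\delta\le 1$ to absorb the $\varepsilon^2$ term, gives $\mathsf{Def}_k/W_2^2\le C\sqrt{\bar\delta}$.
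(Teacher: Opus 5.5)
Your Step 1 is correct and rather neat; it does not even need the one-dimensional monotonicity hypothesis (item 4 of Assumption~\ref{assump:regular-residual-transport}). Step 2 is also fine. Strong log-concavity with constant $a$ gives a Poincar\'e inequality with constant $1/a$, and pointwise $\|J-\bar cI\|_F^2=\|E\|_F^2+k(c-\bar c)^2$. The nondegeneracy you ``expect'' is available from the assumption. Since $J\succeq mI$ with $m>1$, $T_\perp-\mathrm{Id}$ is $(m-1)$-strongly monotone, and whitening then gives $W_2^2\ge k(m-1)^2$.

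\textbf{The gap is Step 3.} It is the entire difficulty of your route, and it is not proved. The identity $(k-1)\nabla c=\mathrm{div}\,E$ is right, but turning it into $\mathrm{Var}_\alpha(c)\lesssim\E_\alpha\|E\|_F^2$ requires a weighted Ne\v{c}as/Lions-type rigidity estimate for $\alpha$ on all of $\R^k$. If you test against $\nabla u$, with $\mathrm{div}(e^{-V}\nabla u)=e^{-V}(c-\bar c)$, and integrate by parts against $e^{-V}$, you get the harmless $\int E:\nabla^2u\,d\alpha$ plus a term $\int \nabla u^\top E\,\nabla V\,d\alpha$. The assumption bounds $\nabla^2V$ only from below, so $\nabla V$ is not controlled by $(a,m,L)$. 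Your fallback also fails, because the assumption gives no bound on $\nabla J$.

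There is also a factor-$k$ slip. The fluctuation enters as $k\,\mathrm{Var}(c)$, so your claimed $\mathrm{Var}(c)\le C\,\E\|E\|_F^2\le CkL^2\bar\delta$ yields $\varepsilon^2\lesssim k^2\bar\delta$, not $k\bar\delta$. After dividing by $W_2^2\gtrsim k$, the final constant grows like $\sqrt k$, which defeats the theorem in exactly the high-dimensional residual regime it targets. You would need $\mathrm{Var}(c)\lesssim\E\|E\|_F^2/k$ with a dimension-free weighted constant, which is even harder.

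\textbf{The missing idea is to drop the single global affine surrogate}, so that the spatial variation of $c(x)$ never enters. The paper works slice by slice.

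\begin{itemize}
\item For each $\theta$ it compares $V_\theta=\langle\theta,T_\perp(X)\rangle$ with its own regression $m_\theta(U_\theta)=\E[V_\theta\mid U_\theta]$, rather than with $\langle\theta,A(X)\rangle$. By item 4, $m_\theta$ is monotone.
\item The same triangle-inequality argument as your Step 1 then bounds the slice gap by $2\sqrt{\E|V_\theta-U_\theta|^2}\,\sqrt{\E\,\mathrm{Var}(V_\theta\mid U_\theta)}$.
\item The conditional variance is controlled by a Poincar\'e inequality on the hyperplane fibres $\{\langle\theta,x\rangle=u\}$, since strong log-concavity survives restriction. This gives $\E\,\mathrm{Var}(V_\theta\mid U_\theta)\le a^{-1}\E\|(I-\theta\theta^\top)J(X)\theta\|^2$.
\item The spherical fourth-moment identity gives exactly $\E_\theta\|(I-\theta\theta^\top)J\theta\|^2=\frac{\mathrm{tr}(J^2)}{k+2}\,\delta(J)$. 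This is a purely pointwise anisotropy: a locally isotropic Jacobian has zero tangential action along every fibre, whatever the value of $c(x)$.
\item Cauchy--Schwarz in $\theta$, $\mathrm{tr}(J^2)\le kL^2$ and $W_2^2\ge k(m-1)^2$ then give the dimension-free constant $\frac{2L}{m-1}\sqrt{\frac{k}{a(k+2)}}$.
\end{itemize}
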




\subsection{Optimal Subspace and Recoverability}\label{sec:theory:subspace}

We now address the question: \emph{which subspace should be used}, and does adding the residual sliced term distort subspace learning? Define
$
U^\star \in \arg\max_{V\in\mathcal V_{d,r}}
\mathcal L_z(V).
$ Under the extended spiked transport model, the transport energy concentrates on a low-dimensional signal space.
The next result shows that the CSS objective does \emph{not} fundamentally change which subspace is optimal: under an eigen-gap and vanishing cross-displacement, the CSS optimizer remains close to the signal subspace.

\begin{assumption}[Extended spiked transport model (ESTM)]\label{assp:estm}
Fix $r\in\{1,\dots,d-1\}$ and let $k=d-r$.
A pair $(\mu_0,\mu_1)\in\mathcal P_2(\R^d)\times\mathcal P_2(\R^d)$ satisfies \emph{ESTM} if there exists a subspace $U\in\mathcal V_{d,r}$ and measures
$\mu_{a,U}\in\mathcal P_2(U)$ and $\mu_{a,\perp}\in\mathcal P_2(U^\perp)$ such that
$
\mu_a = \mu_{a,U}\otimes \mu_{a,\perp}\quad(a\in\{0,1\})$, $\lambda_{\min}(M_{UU})>\lambda_{\max}(M_{\perp\perp}),
$
where $M_{UU}$ and $M_{\perp\perp}$ are the displacement matrices associated with the transport pairs on $U$ and $U^\perp$, respectively.
\end{assumption}

\begin{theorem}[Stability of the CSS optimizer]\label{thm:subspace-stability}
Assume the extended spiked transport model (Assumption~\ref{assp:estm}) holds with signal subspace $U\in\mathcal V_{d,r}$.
Assume moreover that the cross-block of the displacement matrix vanishes, $M_{U\perp}=0$, when $M(\mu_0,\mu_1)$ is expressed in the orthogonal decomposition $U\oplus U^\perp$.
Let $U^\star$ be any maximizer as above.
Then
\begin{equation}\label{eq:subspace-bound}
\|P_{U^\star}-P_U\|_F^2
\ \le\
\frac{2k\,\lambda_{\max}(M_{\perp\perp})}{\lambda_{\min}(M_{UU})-\lambda_{\max}(M_{\perp\perp})}.
\end{equation}
In particular, when the eigengap $\lambda_{\min}(M_{UU})-\lambda_{\max}(M_{\perp\perp})$ is large, the CSS optimizer is uniquely determined up to small perturbations and aligns with the signal subspace.
\end{theorem}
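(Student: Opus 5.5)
We follow the standard Davis--Kahan-type argument for projection pursuit: sandwich the CSS objective between quadratic forms in the displacement matrix $M:=M(\mu_0,\mu_1)$, then compare the value at $U^\star$ with the value at $U$. Under the product structure and $M_{U\perp}=0$, $M$ is block diagonal with blocks $M_{UU}$ and $M_{\perp\perp}$. We then take $\mathrm{tr}M=W_2^2(\mu_0,\mu_1)=\mathrm{tr}M_{UU}+\mathrm{tr}M_{\perp\perp}$, using Lemma~\ref{lem:additivity}.

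\emph{Lower bound at $U$.} Write $\mathcal L(V)$ for the unconditional CSS objective. Since $\mathcal L(U)\ge W_2^2(\mu^U_1,\mu^U_0)=\mathrm{tr}M_{UU}$, maximality of $U^\star$ gives $\mathcal L(U^\star)\ge \mathrm{tr}M_{UU}$. Here the first-block transport is the block of the product Brenier map, so its energy equals $\mathrm{tr}M_{UU}$.

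\emph{Upper bound at a general $V$.} By Lemma~\ref{lem:proj-energy}, $W_2^2(\mu^V,\nu^V)\le \mathrm{tr}(P_VMP_V)$. For the residual term we use Proposition~\ref{prop:sw-lb} only in spirit, and need a better bound: slicing along $\theta\in V^\perp$ and coupling through the full Brenier map gives $W_2^2(\pi_\theta\mu,\pi_\theta\nu)\le \theta^\top M\theta$. Averaging over $\theta\sim\sigma_k$ on $\mathbb S(V^\perp)$ gives $k\,\mathrm{SW}_2^2(\mu^{V^\perp},\nu^{V^\perp})\le \mathrm{tr}(P_{V^\perp}M)$. This alone only yields $\mathcal L(V)\le \mathrm{tr}M$, which is uninformative. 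The main obstacle is therefore to extract the subspace dependence. My plan is to use instead the cruder bound $k\,\mathrm{SW}_2^2\le k\,\lambda_{\max}$ of the residual slice, i.e.\ $k\,\mathrm{SW}_2^2(\mu^{V^\perp},\nu^{V^\perp})\le k\max_{\theta\in V^\perp}\theta^\top M\theta$, and to bound the residual contribution crudely by $k\lambda_{\max}(M_{\perp\perp})$ plus leakage of $M_{UU}$ into $V^\perp$. Combining the two bounds,
\[
\mathrm{tr}M_{UU}\le \mathcal L(U^\star)\le \mathrm{tr}(P_{U^\star}M)+k\,\lambda_{\max}(M_{\perp\perp})+(\text{leakage}).
\]
If the leakage is absorbed (for instance by bounding the residual term as $\le \mathrm{tr}(P_{U^{\star\perp}}M)\le$ something dominated by $2k\lambda_{\max}(M_{\perp\perp})$ after rearrangement), this yields the key inequality
\[
\mathrm{tr}(P_UM)-\mathrm{tr}(P_{U^\star}M)\le 2k\,\lambda_{\max}(M_{\perp\perp}).
\]
The factor $2$ is budgeted to cover the residual slice term together with the comparison slack. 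I expect this step to be where care is needed; an acceptable fallback is to compare $\mathcal L(U^\star)$ directly with $\mathrm{tr}(P_{U^\star}M)+k\lambda_{\max}(M_{\perp\perp})$ and note that $\mathrm{tr}(P_{U^\star}M_{\perp\perp}\text{-part})\le k\lambda_{\max}$ as well.

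\emph{Trace-gap lemma.} Finally we show that for block-diagonal $M$ with gap $g=\lambda_{\min}(M_{UU})-\lambda_{\max}(M_{\perp\perp})>0$ and any $V\in\mathcal V_{d,r}$,
\[
\mathrm{tr}(P_UM)-\mathrm{tr}(P_VM)\ge \tfrac g2\|P_V-P_U\|_F^2.
\]
Writing $P_V$ in blocks, $\mathrm{tr}(P_VM)=\mathrm{tr}(AM_{UU})+\mathrm{tr}(DM_{\perp\perp})$ with $A=P_UP_VP_U$ and $D=P_{U^\perp}P_VP_{U^\perp}$, and $\mathrm{tr}A+\mathrm{tr}D=r$. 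Then
\[
\mathrm{tr}((I-A)M_{UU})-\mathrm{tr}(DM_{\perp\perp})\ge \lambda_{\min}(M_{UU})(r-\mathrm{tr}A)-\lambda_{\max}(M_{\perp\perp})\,\mathrm{tr}D=g\,(r-\mathrm{tr}A),
\]
and $r-\mathrm{tr}A=\tfrac12\|P_V-P_U\|_F^2$. Applying this with $V=U^\star$ and dividing gives \eqref{eq:subspace-bound}, up to the constant bookkeeping noted above.
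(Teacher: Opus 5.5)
There is a genuine gap, and it is the step you flag as needing care. Your trace-gap lemma is correct and is the same as Step~1 of the paper's proof, with your $A=P_UP_VP_U$ playing the role of the paper's $B$.

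What remains, in your plan and in the paper's, is the bound $k\,\mathrm{SW}_2^2(\mu_1^{U^{\star\perp}},\mu_0^{U^{\star\perp}})\le k\,\lambda_{\max}(M_{\perp\perp})$. The budget has to be $k\lambda_{\max}(M_{\perp\perp})$, not $2k\lambda_{\max}(M_{\perp\perp})$: combined with your factor $g/2$, the latter only gives $4k\lambda_{\max}(M_{\perp\perp})/g$.

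The paper gets this bound from the chain $W_2^2(\mu_1^{V^\perp},\mu_0^{V^\perp})\le\mathrm{tr}(P_{V^\perp}MP_{V^\perp})\le\mathrm{tr}(M_{\perp\perp})$. The second inequality goes the wrong way for every $V\neq U$. By your own lemma, $\mathrm{tr}(P_{V^\perp}M)=\mathrm{tr}M-\mathrm{tr}(P_VM)\ge\mathrm{tr}(M_{\perp\perp})+\tfrac g2\|P_V-P_U\|_F^2$. This is consistent with Ky Fan's minimum principle, since the $k$ smallest eigenvalues of $M$ are those of $M_{\perp\perp}$.

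So the leakage you hope to absorb is at least as large as the curvature you gained in the signal term. This is just your own observation that the two Brenier-coupling bounds sum to $\mathrm{tr}M$ for every $V$. The cruder bound $k\max_{\theta\in V^\perp}\theta^\top M\theta$ does not help either: in the example below it equals $k\lambda_{\max}(M_{UU})$ at a maximizer.

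No bookkeeping can close the gap, because the statement is false. Take $d=2$, $r=k=1$, $\mu_0=N(0,I_2)$, $\mu_1=N(0,\mathrm{diag}(9,4))$ and $U=\mathrm{span}(e_1)$.
\begin{itemize}
\item Both laws are products along $U\oplus U^\perp$, and the Brenier map is $x\mapsto\mathrm{diag}(3,2)x$. Hence $M=\mathrm{diag}(4,1)$, $M_{U\perp}=0$, and $\lambda_{\min}(M_{UU})=4>1=\lambda_{\max}(M_{\perp\perp})$.
\item For $k=1$ the sliced term on a line is the ordinary one-dimensional $W_2^2$. So $\mathcal L(V)=W_2^2(\mu_1^V,\mu_0^V)+W_2^2(\mu_1^{V^\perp},\mu_0^{V^\perp})$ is invariant under $V\leftrightarrow V^\perp$.
\item Here $\mathcal L(U)=\mathcal L(U^\perp)=4+1=5=W_2^2(\mu_1,\mu_0)$, so by Theorem~\ref{thm:css-certificate} both are maximizers.
\item Taking $U^\star=U^\perp$ gives $\|P_{U^\star}-P_U\|_F^2=2$, while the claimed bound is $2\cdot 1\cdot 1/3=2/3$.
\item Even your weakened key inequality fails: $\mathrm{tr}(P_UM)-\mathrm{tr}(P_{U^\star}M)=3>2=2k\lambda_{\max}(M_{\perp\perp})$.
\end{itemize}
Replacing $9$ by $s^2$ with $s\to\infty$ makes the eigengap arbitrarily large without breaking the tie, so the ``in particular'' clause fails too. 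The same tie occurs for every $r=k$ with isotropic blocks, e.g.\ $\mu_1=N(0,\mathrm{diag}(9I_r,4I_r))$.

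What your trace-gap lemma does prove is the analogous statement for the projection-only objective, which is what Algorithm~\ref{alg:css} actually optimizes. Any $\hat U\in\arg\max_V W_2^2(\mu_1^V,\mu_0^V)$ satisfies $\mathrm{tr}M_{UU}\le W_2^2(\mu_1^{\hat U},\mu_0^{\hat U})\le\mathrm{tr}(P_{\hat U}M)\le\mathrm{tr}M_{UU}-\tfrac g2\|P_{\hat U}-P_U\|_F^2$, hence $\hat U=U$ exactly. A stability result for the CSS maximizer itself needs an extra hypothesis that controls $k\,\mathrm{SW}_2^2$ on misaligned residual subspaces.
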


A common concern is that augmenting the objective with a residual term might ``pull'' the optimizer away from the true signal space.
Theorem~\ref{thm:subspace-stability} shows this effect is controlled by an explicit eigengap:
if the signal displacement energies dominate residual energies (the defining premise of spiked transport), then the optimizer is stable.
Thus the residual slicing term improves the certificate \emph{without} paying the price of unstable subspace learning in the regimes where dimension reduction is meaningful.

In all, the population theory suggests a practical two-stage strategy:
(i) recover the dominant signal subspace using projection pursuit (WPP), and
(ii) recover residual transport energy on the orthogonal complement using Monte Carlo sliced transport.
The stability theorem justifies this decoupling: in spiked regimes, the subspace that optimizes the projection term is already near-optimal for the full CSS objective.

\begin{algorithm}[t]
\caption{Conditioned subspace--slicing (CSS) certificate estimator}\label{alg:css}
\begin{algorithmic}[1]
\REQUIRE Data $\{(X_i,A_i,Z_i)\}_{i=1}^N$, signal dimension $r$, number of slices $L$, and a procedure to form conditional empirical measures $\hat\mu_{a\mid z}$ (e.g.\ stratification or kernel weighting).
\FOR{each stratum (or query point) $z$}
    \STATE Construct $\hat\mu_{1\mid z}$ and $\hat\mu_{0\mid z}$ from the samples.
    \STATE \textbf{(Signal subspace via WPP)} Compute
    $
      \hat U(z)\in\arg\max_{V\in\mathcal V_{d,r}}
      W_2^2\big((P_V)_{\#}\hat\mu_{1\mid z},(P_V)_{\#}\hat\mu_{0\mid z}\big).
    $
    \STATE \textbf{(Signal term)} Set $\widehat S(z)=W_2^2\big((P_{\hat U(z)})_{\#}\hat\mu_{1\mid z},(P_{\hat U(z)})_{\#}\hat\mu_{0\mid z}\big)$.
    \STATE \textbf{(Residual term)} Let $\hat\mu^\perp_{a\mid z}=(P_{\hat U(z)^\perp})_{\#}\hat\mu_{a\mid z}$ and draw $\theta_\ell\stackrel{i.i.d.}{\sim}\mathrm{Unif}(\mathbb S^{k-1})$, $k=d-r$.
    \STATE Compute $\widehat R(z)=k\cdot \frac{1}{L}\sum_{\ell=1}^L W_2^2\big((\pi_{\theta_\ell})_{\#}\hat\mu^\perp_{1\mid z},(\pi_{\theta_\ell})_{\#}\hat\mu^\perp_{0\mid z}\big)$.
    \STATE Output $\widehat{\mathrm{LB}}(z)=\widehat S(z)+\widehat R(z)$.
\ENDFOR
\STATE Output the integrated certificate $\widehat{\overline{\mathrm{LB}}}=\frac{1}{N}\sum_{i=1}^N \widehat{\mathrm{LB}}(Z_i)$ (or an appropriate quadrature/average).
\end{algorithmic}
\end{algorithm}

\subsection{Finite-Sample Guarantees}\label{sec:theory:finite-sample}

We conclude the theory section with the nonasymptotic error bound for the plug-in CSS estimator in Algorithm~\ref{alg:css}.
The bound separates the intrinsic-dimensional contribution (signal term) from the statistically stable residual contribution (sliced term).

\begin{theorem}[Finite-sample error of the CSS estimator]\label{thm:finite-sample}
Fix $z$ and suppose $\mu_{a\mid z}$ satisfy a transport inequality $T_{p'}(\sigma_a(z)^2)$ for some $p'\in[1,2]$.
Let $\sigma(z):=\max\{\sigma_0(z),\sigma_1(z)\}$, $r(z):=\dim U(z)$, $k(z):=d-r(z)$, and $n:=\min\{n_0,n_1\}$.
Let $\widehat{\mathrm{LB}}(z)$ be the estimator produced by Algorithm~\ref{alg:css} with $L$ slices.
Then there exists a universal constant $C>0$ such that $\E\Big[\big|\widehat{\mathrm{LB}}(z)-\mathrm{LB}(z)\big|\Big] \leq$
\[
C\,\sigma(z)^2
\left\{
r_{2,r(z)}(n)+\sqrt{\frac{d\log n}{n}}+k(z)\Big(\frac{1}{\sqrt{n}}+\frac{1}{\sqrt{L}}\Big)
\right\},
\]
where $r_{2,r(z)}(n)$ is the intrinsic-dimensional rate in Proposition~\ref{prop:wpp-rate} (with $p=2$).
\end{theorem}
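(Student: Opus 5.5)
The plan is to decompose the error into a signal part, a residual part, and a Monte Carlo part, and to control each with results already stated. Write $\mathrm{LB}(z)=S(z)+R(z)$, where $S(z)=W_2^2(\mu^{U}_{1\mid z},\mu^{U}_{0\mid z})$ is the population WPP value $\widetilde W_{2,r}^2(\mu_{1\mid z},\mu_{0\mid z})$ and $R(z)=k\,\mathrm{SW}_2^2(\mu^{U^\perp}_{1\mid z},\mu^{U^\perp}_{0\mid z})$. Also set $\bar R(z)=k\,\mathrm{SW}_2^2(\hat\mu^\perp_{1\mid z},\hat\mu^\perp_{0\mid z})$, the $L\to\infty$ limit of $\widehat R(z)$. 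The triangle inequality then gives
\[
|\widehat{\mathrm{LB}}-\mathrm{LB}|\le |\widehat S-S|+|\bar R-R|+|\widehat R-\bar R|,
\]
and I would bound the three terms separately in expectation.

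\emph{Signal term.} Here $\widehat S(z)=\widehat W_{2,r}^2$ exactly. Since $a^2-b^2=(a-b)(a+b)$, we have $|\widehat S-S|\le |\widehat W_{2,r}-\widetilde W_{2,r}|\,(\widehat W_{2,r}+\widetilde W_{2,r})$. Under $T_{p'}(\sigma^2)$, second moments (after centering, or with $W_2(\mu_{1\mid z},\mu_{0\mid z})$ absorbed into the $\sigma^2$ scale, as the statement implicitly does) are $O(\sigma^2)$, so both Wasserstein values are $O(\sigma)$ in $L^2$. Applying Cauchy--Schwarz together with Proposition~\ref{prop:wpp-rate} at $p=2$ (using an $L^2$ version of \eqref{eq:wpp-rate}, or its $L^1$ version plus a boundedness/concentration argument) yields $C\sigma^2\big(r_{2,r}(n)+\sqrt{d\log n/n}\big)$.

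\emph{Residual term.} The difficulty is that $\bar R$ uses the estimated subspace $\hat U$ rather than $U$. I would split $|\bar R-R|$ into a subspace-mismatch part, evaluated at population measures, and a sampling part at the fixed subspace $\hat U$. For the sampling part, each slice is a one-dimensional OT between empirical projections. For a fixed direction, one-dimensional $W_2^2$ concentrates at rate $\sigma^2/\sqrt n$ under the transport inequality, since the quantile representation reduces it to an empirical process on $[0,1]$. Averaging over $\theta$ and multiplying by $k$ gives $Ck\sigma^2/\sqrt n$. Independence of $\hat U$ from this sampling error would be handled either by sample splitting or by a uniform bound over directions $\theta\in\mathbb S^{d-1}$. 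The uniform bound adds only a $\sqrt{d\log n/n}$ term, which is already present.

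For the mismatch part, I would use that $V\mapsto k\,\mathrm{SW}_2^2(\mu^{V^\perp}_1,\mu^{V^\perp}_0)$ is Lipschitz in $P_V$ with constant of order $k\sigma^2$, because each slice is 1-Lipschitz in the direction. I would then argue that the WPP excess loss already controls this quantity, or alternatively simply absorb it into the $\sqrt{d\log n/n}$ term. I expect this step to be the main obstacle. Without an eigengap, $\hat U$ need not converge to $U$, so I would likely have to interpret $\mathrm{LB}(z)$ as evaluated along the WPP optimizer, or invoke a uniform-over-$V$ deviation bound. That bound follows from the same chaining used in Proposition~\ref{prop:wpp-rate} applied to one-dimensional projections.

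\emph{Monte Carlo term.} Conditionally on the data, $\widehat R$ is an average of $L$ i.i.d.\ terms $kW_2^2((\pi_{\theta_\ell})_\#\hat\mu_1^\perp,(\pi_{\theta_\ell})_\#\hat\mu_0^\perp)$ with mean $\bar R$. Each term has variance at most $k^2\,\E[W_2^4]\lesssim k^2\sigma^4$, using moment bounds on the empirical measures. Jensen's inequality then gives $\E|\widehat R-\bar R|\le Ck\sigma^2/\sqrt L$. Summing the three bounds gives the stated rate.
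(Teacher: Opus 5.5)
There is a genuine gap, and it is where you suspected: the subspace-mismatch term. If $\mathrm{LB}(z)$ is read as the CSS objective at a fixed population subspace $U$, your triangle inequality leaves $k\,\bigl|\mathrm{SW}_2^2(\mu^{\hat U^\perp}_{1\mid z},\mu^{\hat U^\perp}_{0\mid z})-\mathrm{SW}_2^2(\mu^{U^\perp}_{1\mid z},\mu^{U^\perp}_{0\mid z})\bigr|$. This is a difference of two \emph{population} quantities, and none of your patches controls it at the stated rate.
\begin{itemize}
\item A uniform-over-$V$ deviation bound controls sampling error at a data-dependent $V$. It says nothing about the population value at $\hat U$ versus at $U$.
\item Nothing justifies absorbing this term into $\sqrt{d\log n/n}$.
\item Going through the WPP excess loss needs an eigengap/curvature condition like the one in Theorem~\ref{thm:subspace-stability}, which is not among the hypotheses. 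Even then it gives only $\|P_{\hat U}-P_U\|_F\lesssim(\text{excess}/\text{gap})^{1/2}$. Your Lipschitz bound then yields a square-root, gap-dependent rate multiplied by $k\sigma^2$.
\item Without a gap the term need not vanish at all. If the population projected $W_2^2$ has two maximizers with different residual sliced values, $\hat U$ keeps picking either one with non-negligible probability.
\end{itemize}
The paper sidesteps this by definition. It takes the target ``at the chosen subspace'': $S(z)$ is the population WPP value, but the residual target is $k\,\E_\theta G(\theta)$, with $G$ built from $\mu_{a\mid z}$ projected onto $\hat U(z)^\perp$. The mismatch term is then identically zero, and your decomposition reduces to the paper's: Proposition~\ref{prop:wpp-rate} for the signal, Lemma~\ref{lem:1d-w2-stab} for the residual slices, and a second-moment bound for the Monte Carlo error. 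You list this reading as a fallback; you have to adopt it, because it is the only reading under which the stated bound follows from the hypotheses.

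A second error concerns the dependence of $\hat U(z)$ on the same data. Your uniform-over-directions option does not give the stated rate. The residual error is $k$ times the supremum over $\theta$, so it contributes $k\,\sigma^2\sqrt{d\log n/n}$. That dominates both $k\sigma^2/\sqrt n$ and $\sigma^2\sqrt{d\log n/n}$ in the statement, so it is not ``already present''. Sample splitting (fit $\hat U$ on one half, compute the slices on the other) does give $k\sigma^2/\sqrt n$, at the cost of modifying Algorithm~\ref{alg:css}. The paper's own step, which conditions on $\hat U(z)$ and applies Lemma~\ref{lem:1d-w2-stab} conditionally, is legitimate only if $\hat U$ is independent of the residual samples. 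Your splitting version is therefore the rigorous form of what the paper does.

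Your signal-term argument, via $a^2-b^2=(a-b)(a+b)$ and Cauchy--Schwarz, is more careful than the paper, which applies Proposition~\ref{prop:wpp-rate} to the squared quantity directly. Both arguments need $W_2$ to be $O(\sigma)$, which $T_{p'}(\sigma^2)$ alone does not give; you correctly flag this.
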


The WPP signal term contributes the intrinsic-dimensional rate $r_{2,r(z)}(n)$ (up to a lower-order ambient penalty),
while the residual term scales as $k(z)(n^{-1/2}+L^{-1/2})$, reflecting that one-dimensional OT can be estimated at parametric rate and Monte Carlo slicing introduces an additional $L^{-1/2}$ term.
Thus the CSS estimator recovers residual transport energy without reintroducing the curse of dimensionality.

Theorem~\ref{thm:css-certificate} establishes validity (no false tightening).
Theorem~\ref{thm:bias-control} and Theorem~\ref{thm:deficit-quant} explain when the certificate is near-tight (bias control).
Theorem~\ref{thm:subspace-stability} guarantees that subspace learning is stable in spiked regimes (recoverability).
Finally, Theorem~\ref{thm:finite-sample} shows that the entire pipeline is statistically estimable in finite samples with intrinsic-dimensional scaling.

\subsection{Application on Causal Partial Identification: Advantage of CSS Certificates}
\label{sec:causal_origin}

We finally clarify the causal origin of the transport problem studied as above
and explains why the two-term certificate developed in Section~\ref{sec:theory} yields strictly more informative
partial-identification bounds than projection-only or purely quadratic-moment surrogates in previous literature.
The mapping from potential outcomes to couplings is standard and essentially measure-theoretic; the substantive
difficulty is \emph{high-dimensional} evaluation of the induced transport programs.

\textbf{Potential outcomes in Causality} We observe i.i.d.\ $(Y,X,A)$ with $A\in\{0,1\}$ and potential outcomes $(Y(0),Y(1))$ taking values in $\R^{d_Y}$
(with $d_Y$ possibly large).
Let $Z=h(X)$ be a (possibly vector-valued) covariate summary used for covariate assistance.
Under standard conditions ensuring identification of \emph{marginal} potential-outcome laws
(e.g., random assignment, or conditional ignorability $(Y(0),Y(1))\perp A\mid Z$),
the conditional marginals
$
\nu_{a\mid z}\ :=\ \Law\big(Y(a)\mid Z=z\big)
\qquad (a\in\{0,1\})
$
are identified from the observed distribution via $\nu_{a\mid z}=\Law(Y\mid A=a,Z=z)$. Partial identification arises because the \emph{joint} conditional law of $(Y(0),Y(1))$ given $Z=z$ is not identified.



A particularly important class of partially identified causal functionals are \emph{quadratic forms} of counterfactuals.
They arise in dispersion/heterogeneity measures (e.g., the second moment of individual treatment effects),
risk bounds, and variance-minimization formulations.
These functionals induce OT (or multi-marginal OT) with quadratic costs.

\begin{proposition}[Quadratic dispersion as a quadratic OT functional]\label{prop:pi_to_ot}
Fix $z$ and assume $\nu_{0\mid z},\nu_{1\mid z}\in\mathcal P_2(\R^{d_Y})$.
Consider the quadratic dispersion functional
\[
\psi_2(z)\ :=\ \E\big[\|Y(1)-Y(0)\|^2\mid Z=z\big],
\]
whose value is not identified because it depends on the unknown coupling of $(Y(0),Y(1))\mid Z=z$.
Then the sharp lower endpoint of $\mathcal I_{\|\cdot\|^2}(z)$ equals the quadratic Wasserstein cost:
\begin{equation}\label{eq:pi_w2}
\inf_{\pi_z\in\Pi(\nu_{0\mid z},\nu_{1\mid z})}\ \int \|y_1-y_0\|^2\,\mathrm d\pi_z(y_0,y_1)
\;=\;
W_2^2(\nu_{1\mid z},\nu_{0\mid z}).
\end{equation}
Equivalently, $\psi_2(z)\ge W_2^2(\nu_{1\mid z},\nu_{0\mid z})$ for every feasible coupling.
\end{proposition}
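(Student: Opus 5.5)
The argument is a direct unpacking of definitions, in three steps.

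\textbf{Step 1: the feasible set.} Fix $z$. Any joint conditional law of $(Y(0),Y(1))$ given $Z=z$ that is compatible with the observed data must have marginals $\nu_{0\mid z}$ and $\nu_{1\mid z}$. These marginals are identified via $\nu_{a\mid z}=\Law(Y\mid A=a,Z=z)$ under the stated ignorability. Conversely, any coupling $\pi_z\in\Pi(\nu_{0\mid z},\nu_{1\mid z})$ can be realised as the conditional joint law of potential outcomes generating the same observed distribution. To see this, draw $(Y(0),Y(1))\mid Z=z\sim\pi_z$ independently of $A$ given $Z$, and keep the observed laws of $Z$ and of $A\mid Z$ fixed. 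The observed law of $(Y,A,Z)$ then depends only on the marginals. Hence
\[
\mathcal I_{\|\cdot\|^2}(z)=\Big\{\int\|y_1-y_0\|^2\,\mathrm d\pi_z(y_0,y_1):\ \pi_z\in\Pi(\nu_{0\mid z},\nu_{1\mid z})\Big\}.
\]

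\textbf{Step 2: the lower endpoint.} By Step 1, the lower endpoint of $\mathcal I_{\|\cdot\|^2}(z)$ is the infimum on the left of \eqref{eq:pi_w2}. By definition \eqref{eq:w2def} this infimum equals $W_2^2(\nu_{0\mid z},\nu_{1\mid z})$, and by symmetry of the cost it also equals $W_2^2(\nu_{1\mid z},\nu_{0\mid z})$. For the endpoint to be sharp, and not merely an infimum, we need it to be attained by a feasible coupling. Existence of an optimal coupling follows from standard OT theory. Since $\nu_{a\mid z}\in\mathcal P_2$, the set $\Pi(\nu_{0\mid z},\nu_{1\mid z})$ is tight and hence weakly compact. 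The cost $\|y_1-y_0\|^2$ is continuous and bounded below, so the map $\pi\mapsto\int c\,\mathrm d\pi$ is weakly lower semicontinuous, and the infimum is a finite minimum. Via Step 1, the minimiser corresponds to an admissible counterfactual joint law.

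\textbf{Step 3: the equivalent form.} For every feasible coupling, $\psi_2(z)=\int\|y_1-y_0\|^2\,\mathrm d\pi_z\ge W_2^2(\nu_{1\mid z},\nu_{0\mid z})$.

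\textbf{Main obstacle.} The only nontrivial point is the converse direction of Step 1, namely that every coupling is observationally compatible. The construction above handles it by exploiting that the observed data reveal only the marginal law of $Y(a)$ on the arm $A=a$. A secondary detail is measurability in $z$ of the chosen couplings. This is needed only if one integrates over $z$, and a measurable selection of optimal plans would supply it. The pointwise statement does not require it.
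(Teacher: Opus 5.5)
Your proof is correct and follows the same route as the paper's: it identifies the feasible counterfactual joint laws at $z$ with the couplings in $\Pi(\nu_{0\mid z},\nu_{1\mid z})$ and then reads off the infimum as $W_2^2(\nu_{1\mid z},\nu_{0\mid z})$ by definition. You also prove two points the paper leaves implicit: that every coupling is observationally admissible, and that the infimum is attained.
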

\paragraph{Multi-world extensions.}
When the estimand depends on $K\ge 2$ potential outcomes (``multiple worlds''),
the coupling ambiguity becomes multi-marginal and the sharp identified set can be characterized
by a multi-marginal optimal tranport (MOT) program\citep{gao2024bridging}.
The two-marginal quadratic radius \eqref{eq:pi_w2} remains a canonical building block in such formulations,
and is already the sharp endpoint for quadratic dispersion in the binary-treatment case.
\begin{algorithm}[t]
\caption{CSS-enhanced OT primitive for causal partial identification}\label{alg:css-pi}
\begin{algorithmic}[1]
\REQUIRE Data $\{(Y_i,X_i,A_i)\}_{i=1}^N$, covariate summary $Z=h(X)$, signal dimension $r$, number of slices $L$.
\FOR{each stratum (or query point) $z$}
    \STATE Construct estimates $\hat\nu_{a\mid z}\approx \Law(Y\mid A=a,Z=z)$ for $a\in\{0,1\}$.
    \STATE Compute the CSS certificate $\widehat{\mathrm{LB}}^{Y}(z)$ for $W_2^2(\nu_{1\mid z},\nu_{0\mid z})$
    using Algorithm~\ref{alg:css} with input measures $(\hat\nu_{1\mid z},\hat\nu_{0\mid z})$.
    \STATE Plug $\widehat{\mathrm{LB}}^{Y}(z)$ into the OT-based PI objective/endpoints (e.g., quadratic-form bounds such as \eqref{eq:pi_w2}).
\ENDFOR
\STATE Aggregate over $z$ as required by the estimand (e.g., $\E[\cdot]$ over $Z$).
\end{algorithmic}
\end{algorithm}

Proposition~\ref{prop:pi_to_ot} shows that the causal reduction itself is conceptually clean.
The practical bottleneck is that evaluating the sharp OT program in \eqref{eq:pi_w2} (and its conditional analogues)
is statistically and computationally difficult in high dimension.
This difficulty is also highlighted by work that studies estimation of OT/MOT objectives for quadratic costs:
for instance, \citet{gao2024bridging} establish convergence rates for plug-in MOT estimators and identify regimes
in which minimax-optimal behavior is attainable, underscoring that dimension is a fundamental obstacle.

\paragraph{Relation to existing OT-based PI frameworks.}
Our contribution is orthogonal to recent advances in OT-based partial identification.
For instance, \citet{ji2024modelagnostic} develop a model-agnostic inferential framework based on OT duality,
emphasizing validity even when nuisance components are difficult to estimate.
\citet{gao2024bridging} characterize sharp identified sets via multi-marginal OT and study plug-in estimation for quadratic costs.
The present paper complements these works by isolating a distinct high-dimensional bottleneck:
\emph{dimension-adaptive evaluation of quadratic transport primitives}.


Finally, Algorithm~\ref{alg:css-pi} summarizes how CSS serves as a
drop-in primitive in OT-based partial-identification pipelines. The procedure for estimating conditional marginals can follow any valid
strategy, while CSS only replaces the high-dimensional quadratic transport
evaluation step.

\begin{figure*}[t]
  \centering
  \includegraphics[width=0.95\textwidth]{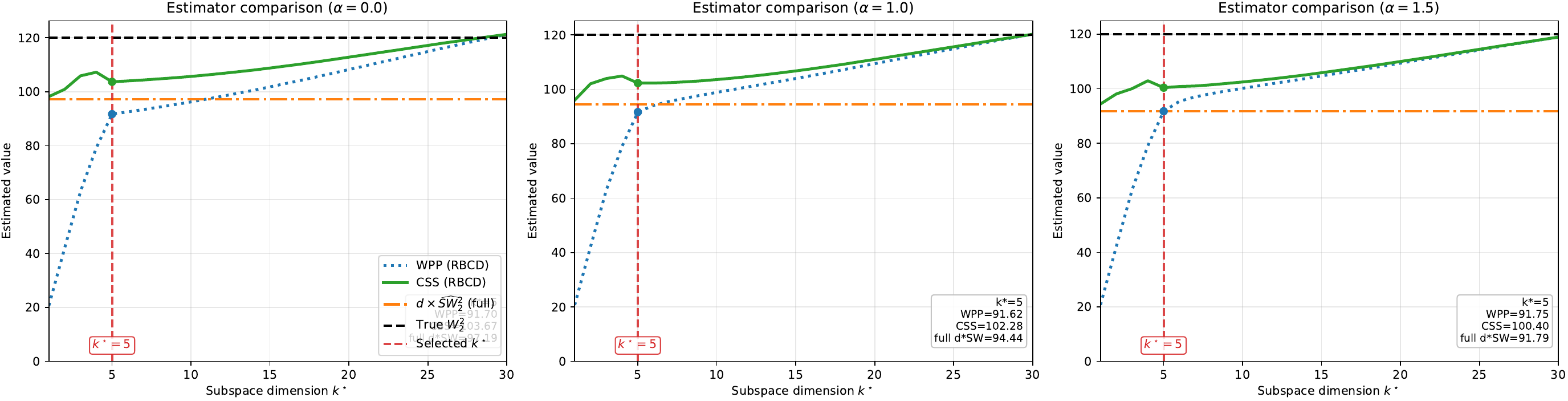}
  \caption{
  \textbf{Estimator comparison across subspace dimensions.}
  Synthetic Gaussian pushforward with $d=30$, signal rank $r=5$,
  and anisotropy $\alpha\in\{0,1.0,1.5\}$.
  Curves show WPP, CSS, the full SW baseline, and the analytic
  $W_2^2$ ground truth across candidate dimensions $k^\star$.
  The selected $k^\star$ is chosen by the WPP plateau rule.
  CSS stays closer to the true $W_2^2$ than projection-only WPP by adding
  a sliced residual correction on $U^\perp$.
  Optimization uses RBCD given by \cite{huang2021rbcd_prw}.
  }
  \label{fig:trade-off}
\end{figure*}

\begin{figure*}[t]
  \centering
  \includegraphics[width=0.5\textwidth]{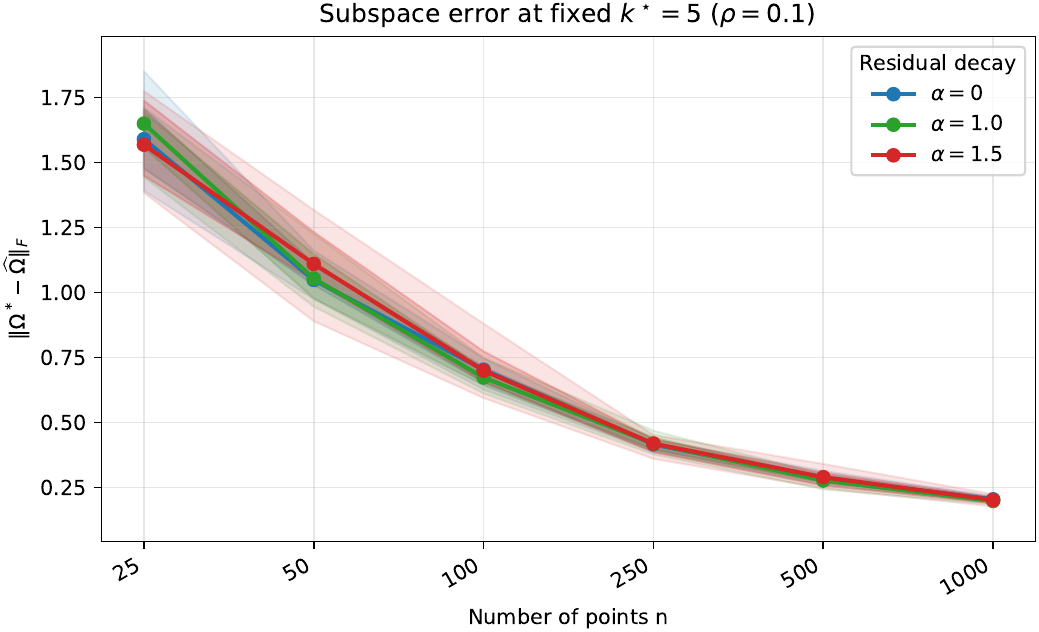}
  \caption{
  \textbf{Finite-sample subspace recovery at a fixed dimension.}
  We fix $k^\star=5$ and report the projection error
  $\|\widehat{\Omega}-\Omega^\star\|_F$ as a function of the sample size
  $n$ for residual decay profiles $\alpha\in\{0,1.0,1.5\}$, with residual
  energy fraction $\rho=0.1$.
  Curves show the median across trials, and shaded bands show quantile
  ranges.
  Optimization uses RBCD given by \citep{huang2021rbcd_prw}.
  }
  \label{fig:convergence}
\end{figure*}

\FloatBarrier

\section{Experiments}\label{sec:experiments}

In this section, we present numerical and real data experiments comparing the finite-sample performance of CSS with WPP and full sliced-Wasserstein baselines.
WPP captures the dominant low-dimensional transport signal but discards residual transport energy, whereas full sliced-Wasserstein estimation averages one-dimensional transports over the ambient space. The code can be found in the link \url{https://github.com/Map1e843/Causal-OT}

\subsection{Synthetic Data}
We consider a synthetic setting where the ground truth Wasserstein distance and the optimal transport map are analytically tractable.  Let the source measure be a standard Gaussian $\mu = \mathcal{N}(0, I_d)$ and the target measure be a pushforward $\nu = T_\# \mu$ defined under the linear map $T(x) = \Sigma^{1/2} x$, where $\Sigma \in \mathbb{R}^{d \times d}$ is a positive definite covariance matrix. Since $T$ is the gradient of the convex function $\psi(x) = \frac{1}{2} x^\top \Sigma^{1/2} x$, \citep{brenier1991polar}
ensures that $T$ is the unique optimal transport map between $\mu$ and $\nu$. The squared Wasserstein distance is given by the Bures-Wasserstein metric $\mathcal{W}_2^2(\mu, \nu) = \text{Tr}\big( (\Sigma^{1/2} - I_d)^2 \big)$. We construct the spectrum of the displacement  matrix $M = (\Sigma^{1/2} - I_d)^2$ with dimension $d=30$ and signal rank $r=5$. Let signal eigenvalues be an equal constant without loss of generality, while the residual eigenvalues follow a power-law $j^{-\alpha}$ controlled by the anisotropy parameter $\alpha$. 
We fix the residual energy to be $10\%$ of the total energy which guarantee the eigengap condition in Assumption~\ref{assp:estm}. 

Figure~\ref{fig:trade-off} reports the estimator values produced by WPP, CSS, and the full
sliced-Wasserstein baseline under the same synthetic Gaussian pushforward
model.
The WPP curve increases with $k^\star$ because larger projected subspaces
capture more transport energy, but at the selected dimension ($k^\star=5$) it
still misses the residual discrepancy on $U^\perp$.
CSS corrects this gap by adding the scaled sliced residual term, and therefore
stays closer to the analytic $W_2^2$ ground truth across all three anisotropy
levels.
In contrast, the full $d\widehat{\mathrm{SW}}_2^2$ baseline remains below the
ground truth, reflecting the loss incurred by averaging over random directions
in the ambient space.
As $\alpha$ increases, the residual transport becomes more anisotropic and the
CSS correction becomes slightly less tight, but it continues to improve over
projection-only WPP at the selected working dimension.
We also use this experiment to illustrate our dimension selection rule. We inspect the WPP path and select the smallest dimension after the main marginal gain has saturated.
In Figure~\ref{fig:trade-off}, the WPP curve rises sharply up to
$k^\star=5$ and then enters a plateau, matching the planted signal
rank. We therefore use $k^\star=5$ as the working dimension and evaluate CSS at this dimension. This conservative choice avoids fitting high-dimensional projected OT when the additional gain is small, while CSS still recovers residual transport energy
through the sliced correction on $U^\perp$.

Figure~\ref{fig:convergence} fixes a conservative
dimension $k^\star=5$ and examines finite-sample recovery of the 
target subspace as $n$ increases. We report the projection error 
$\|\widehat{\Omega}-\Omega^\star\|_F$ under three residual decay 
profiles, $\alpha\in\{0,1.0,1.5\}$. Since the residual component 
accounts for only a small fraction of the total transport energy
($\rho=0.1$), the leading five-dimensional signal subspace remains 
dominant across all three residual spectra. As a result, the absolute
recovery curves are close, and we do not interpret Figure~\ref{fig:convergence} as showing a
strong ordering among the values of $\alpha$. Instead, the figure 
demonstrates finite-sample stability: the subspace error decreases
consistently with $n$, from a relatively large error at small sample 
sizes to a much smaller error when $n=1000$. The small separations 
between the curves indicate that residual spectral decay has a mild
secondary effect, whereas the common convergence trend is driven
primarily by the increasing sample size.
\subsection{Real Data}
\label{sec:real-data}
We next evaluate CSS on the public RHC  dataset.
The treatment indicator records whether a patient received right-heart
catheterization, and the outcome vector summarizes survival and hospitalization
status over multiple follow-up horizons.
This setting is useful for testing whether the proposed certificate remains
informative beyond synthetic Gaussian models.
Unlike the synthetic experiment, the real data experiment
does not provide the population optimal transport map or the population
quadratic Wasserstein distance.
We therefore do not interpret this experiment as an absolute-error benchmark
against the true $W_2^2$.
Instead, we compare held-out lower certificates produced by WPP, CSS, and a
full scaled sliced-Wasserstein baseline.

We construct a $27$-dimensional outcome vector from three quantities at each
horizon $h\in\{1,2,3,5,7,10,14,21,30\}$: survival beyond $h$ days, restricted
survival time up to $h$ days, and hospital-free days up to $h$ days.
All coordinates are normalized to a common scale.
For each random split, we learn the WPP signal subspace on the training part
and evaluate all certificates on the held-out part.
To account for observed prognostic heterogeneity, we stratify patients by the
SUPPORT survival-risk score and average the stratum-level certificates using
the held-out stratum masses.
The same learned subspace is used for WPP and CSS; CSS differs only by adding
the nonnegative sliced-Wasserstein correction on the orthogonal residual
subspace.
\begin{figure}[h]
    \centering
    \includegraphics[width=0.82\linewidth]{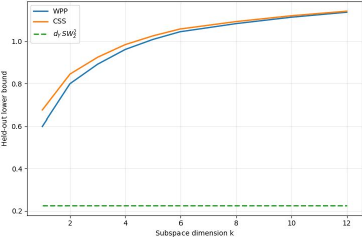}
    \caption{Real-data RHC comparison of WPP, CSS, and the full scaled sliced-Wasserstein baseline across subspace dimensions.}
    \label{fig:rhc-css}
\end{figure}

Figure~\ref{fig:rhc-css} reports the held-out lower certificates as the
subspace dimension varies.
Across the tested dimensions, CSS lies above WPP.
This is consistent with the theory: WPP captures the dominant low-dimensional
transport signal but discards residual discrepancy, whereas CSS keeps the WPP
signal term and recovers part of the residual transport energy through sliced
one-dimensional projections.
The gap between CSS and WPP is most visible at small and moderate dimensions,
where more outcome variation remains in the orthogonal complement.
Both WPP and CSS are also well above the full scaled sliced-Wasserstein
baseline, suggesting that learning a task-adaptive signal subspace is important
in this real-data problem.
These results match the qualitative trend observed in the synthetic
experiments.
They show that CSS yields a more informative held-out transport certificate
than projection-only WPP while retaining the conservative lower-bound
interpretation.

\section{Conclusion}
In this paper, we propose the conditioned subspace–slicing (CSS) certificate to overcome the statistical fragility of OT-based causal partial identification in high dimensions. By integrating a low-dimensional OT signal with a scaled sliced-Wasserstein residual, our approach yields certified lower bounds that recover essential transport energy typically discarded by projection-only methods. We establish the tightness of this certificate under an extended spiked transport model and provide a plug-in estimator with nonasymptotic guarantees that explicitly characterize the intrinsic bias–variance trade-off. Empirical results confirm our theoretical insights: the CSS estimator effectively closes the population gap at the signal scale with minimal finite-sample overhead, offering a robust and dimension-adaptive primitive for high-dimensional causal inference.

\section*{Acknowledgment}
We are grateful to Zonghao Chen for helpful discussions. Zhiheng Zhang is supported by “the Fundamental Research Funds for the Central Universities”
(Grant No,2025110602) of Shanghai University of Finance and Economics, and Independent Research Project (Grant No,2026110081) funded by the School of Statistics and Data Science. This work was supported by the Shanghai Engineering Research Center of Finance Intelligence (Grant No,19DZ2254600).

\medskip

\section*{Impact Statement}
This paper presents work whose goal is to advance the field of machine learning. There are many potential societal consequences of our work, none of which we feel must be specifically highlighted here.

\bibliography{iclr2023_conference}
\bibliographystyle{icml2026}

\onecolumn

\newpage

\appendix



\section{Proofs for Section~\ref{sec:theory}}\label{app:proofs}

Throughout this appendix we work in Euclidean spaces equipped with the usual inner product and norm.
For a linear subspace $V\subset\R^d$, let $P_V$ be the orthogonal projector onto $V$ and $V^\perp$ its orthogonal complement.
For a Borel probability measure $\mu$ on $\R^d$ and a measurable map $T$, write $T_\#\mu$ for the pushforward of $\mu$ by $T$.
For $V\subset\R^d$, we write $\mu^V := (P_V)_\#\mu$ and $\mu^{V^\perp}:=(P_{V^\perp})_\#\mu$.


\subsection{Auxiliary Identities}\label{app:aux}

\begin{lemma}[Spherical second-moment identity]\label{lem:sphere2}
Let $k\ge 1$ and $\sigma_k$ be the Haar probability measure on $\mathbb S^{k-1}$.
Then
\[
\int_{\mathbb S^{k-1}} \theta\theta^\top\,\mathrm d\sigma_k(\theta)\;=\;\frac{1}{k}I_k,
\qquad\text{and hence}\qquad
\int_{\mathbb S^{k-1}} \langle \theta,v\rangle^2\,\mathrm d\sigma_k(\theta)\;=\;\frac{\|v\|^2}{k}
\ \ \text{for all }v\in\R^k.
\]
\end{lemma}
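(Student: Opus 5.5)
The plan is to exploit rotation invariance of the Haar measure $\sigma_k$ together with linearity of the integral. Set $S:=\int_{\mathbb S^{k-1}}\theta\theta^\top\,\mathrm d\sigma_k(\theta)$, a symmetric $k\times k$ matrix whose entries are finite because $|\theta_i\theta_j|\le 1$ on the sphere. Each orthogonal $Q\in O(k)$ preserves $\sigma_k$, so $\theta\mapsto Q\theta$ has the same law as $\theta$. Substituting gives
\[
QSQ^\top=\int (Q\theta)(Q\theta)^\top\,\mathrm d\sigma_k(\theta)=S
\]
for every $Q\in O(k)$.

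Next I will show that a symmetric matrix commuting with all of $O(k)$ is a multiple of the identity; the quickest route is to avoid Schur's lemma and use two concrete families of orthogonal matrices. Sign flips $Q=\mathrm{diag}(\pm1)$ give $S_{ij}=-S_{ij}$ for $i\neq j$, so the off-diagonal entries vanish. Coordinate permutations then show that all diagonal entries coincide. Hence $S=cI_k$, and the constant follows from the trace:
\[
kc=\mathrm{tr}\,S=\int\|\theta\|^2\,\mathrm d\sigma_k=1,
\]
so $c=1/k$. For $k=1$ this is trivial, since $\theta=\pm1$.

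For the second identity, write $\langle\theta,v\rangle^2=v^\top\theta\theta^\top v$ and pull the constant vector $v$ out of the integral:
\[
\int\langle\theta,v\rangle^2\,\mathrm d\sigma_k=v^\top S v=\frac{\|v\|^2}{k}.
\]

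The only point that needs a little care is the invariance step, namely justifying $Q_\#\sigma_k=\sigma_k$. This is precisely the defining property of the Haar (uniform) probability measure on the sphere, or equivalently of the normalized surface measure, which is invariant under isometries. I will simply cite this property. Beyond that the argument is routine, so I do not expect a genuine obstacle.
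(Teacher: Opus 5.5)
Your proof is correct and follows the paper's argument: rotational invariance of $\sigma_k$ forces $\int\theta\theta^\top\,\mathrm d\sigma_k(\theta)=cI_k$, the trace identity $\mathrm{tr}(\theta\theta^\top)=\|\theta\|^2=1$ gives $c=1/k$, and the second identity follows from $\langle\theta,v\rangle^2=v^\top\theta\theta^\top v$. Your sign-flip and permutation argument spells out the step ``commutes with every orthogonal matrix, hence scalar,'' which the paper states without proof.
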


\begin{proof}
By rotational invariance of $\sigma_k$, the matrix $M:=\int \theta\theta^\top\,\mathrm d\sigma_k(\theta)$ commutes with every orthogonal matrix,
hence $M=cI_k$ for some $c\in\R$.
Taking traces gives $1=\int\mathrm{tr}(\theta\theta^\top)\,\mathrm d\sigma_k(\theta)=\mathrm{tr}(M)=ck$, hence $c=1/k$.
The second identity follows from $\langle \theta,v\rangle^2=v^\top(\theta\theta^\top)v$.
\end{proof}

\begin{lemma}[Projector overlap and principal-angle identity]\label{lem:proj-overlap}
Let $U,V\in\mathcal V_{d,r}$ and write $P_U,P_V$ for the orthogonal projectors.
Then
\[
\|P_U-P_V\|_F^2
\;=\;2r-2\,\mathrm{tr}(P_UP_V)
\;=\;2\,\mathrm{tr}(P_UP_{V^\perp})
\;=\;2\,\mathrm{tr}(P_VP_{U^\perp}).
\]
\end{lemma}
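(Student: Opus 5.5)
The plan is to expand the Frobenius norm directly and then use the structure of orthogonal projectors. Since $P_U$ and $P_V$ are symmetric and idempotent, $\|P_U-P_V\|_F^2=\mathrm{tr}\big((P_U-P_V)^\top(P_U-P_V)\big)=\mathrm{tr}(P_U^2)-2\,\mathrm{tr}(P_UP_V)+\mathrm{tr}(P_V^2)$. Here we use $\mathrm{tr}(P_VP_U)=\mathrm{tr}(P_UP_V)$ by cyclicity. Next, $\mathrm{tr}(P_U^2)=\mathrm{tr}(P_U)=\dim U=r$, because the trace of an orthogonal projector equals the rank of its range. The same holds for $P_V$. This gives the first equality, $2r-2\,\mathrm{tr}(P_UP_V)$.

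For the second equality, write $P_{V^\perp}=I_d-P_V$. Then $\mathrm{tr}(P_UP_{V^\perp})=\mathrm{tr}(P_U)-\mathrm{tr}(P_UP_V)=r-\mathrm{tr}(P_UP_V)$, and multiplying by two matches. The third equality is symmetric: $\mathrm{tr}(P_VP_{U^\perp})=\mathrm{tr}(P_V)-\mathrm{tr}(P_VP_U)=r-\mathrm{tr}(P_UP_V)$, again using cyclicity of the trace and $\dim V=r$. Equal dimensions are exactly what make the two cross terms coincide.

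As a sanity check, and to justify the name ``principal-angle identity'', one can note that $\mathrm{tr}(P_UP_V)=\|P_UP_V\|_F^2=\sum_{i=1}^r\cos^2\theta_i$. This follows from choosing orthonormal bases $Q_U,Q_V$ so that $Q_U^\top Q_V$ has singular values $\cos\theta_i$. Hence $\|P_U-P_V\|_F^2=2\sum_i\sin^2\theta_i$. This interpretation is not needed for the identities themselves.

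There is no real obstacle here. The only point needing care is that the equality $\mathrm{tr}(P_U)=\mathrm{tr}(P_V)=r$ is used both in the first equality and in identifying the two cross-term expressions.
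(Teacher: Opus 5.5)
Your proposal is correct and follows essentially the same route as the paper: expand $\mathrm{tr}\big((P_U-P_V)^2\big)$ using idempotence, symmetry, cyclicity, and $\mathrm{tr}(P_U)=\mathrm{tr}(P_V)=r$, then substitute $P_{V^\perp}=I-P_V$ (and symmetrically $P_{U^\perp}=I-P_U$) for the remaining equalities. Your principal-angle remark is also correct, since $\|P_UP_V\|_F^2=\mathrm{tr}(P_VP_UP_V)=\mathrm{tr}(P_UP_V)$, but as you note it is not needed for the proof.
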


\begin{proof}
Using $P_U^2=P_U$, $P_V^2=P_V$, and $\mathrm{tr}(AB)=\mathrm{tr}(BA)$,
\[
\|P_U-P_V\|_F^2
=\mathrm{tr}\big((P_U-P_V)^2\big)
=\mathrm{tr}(P_U)+\mathrm{tr}(P_V)-2\mathrm{tr}(P_UP_V)
=2r-2\mathrm{tr}(P_UP_V).
\]
Since $P_{V^\perp}=I-P_V$ and $\mathrm{tr}(P_U)=r$, we have
$r-\mathrm{tr}(P_UP_V)=\mathrm{tr}(P_U(I-P_V))=\mathrm{tr}(P_UP_{V^\perp})$.
The last equality follows similarly by symmetry.
\end{proof}

\subsection{Sliced Wasserstein as a Scaled Lower Bound}\label{app:sw-lb}

\begin{proposition}[Proof of Proposition~2.3]
Let $k\ge 1$ and $\alpha,\beta\in\mathcal P_2(\R^k)$.
Define
\[
\mathrm{SW}_2^2(\alpha,\beta)
:=\int_{\mathbb S^{k-1}} W_2^2\big((\pi_\theta)_\#\alpha,(\pi_\theta)_\#\beta\big)\,\mathrm d\sigma_k(\theta),
\qquad \pi_\theta(x):=\langle \theta,x\rangle.
\]
Then
\[
k\,\mathrm{SW}_2^2(\alpha,\beta)\ \le\ W_2^2(\alpha,\beta).
\]
\end{proposition}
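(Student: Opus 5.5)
The plan is to push a single optimal coupling for $W_2^2(\alpha,\beta)$ through every one-dimensional projection and then average over directions using Lemma~\ref{lem:sphere2}.

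\textbf{Step 1 (optimal coupling).} Since $\alpha,\beta\in\mathcal P_2(\R^k)$, an optimal coupling $\pi^\star\in\Pi(\alpha,\beta)$ attaining \eqref{eq:w2def} exists. This follows from tightness of $\Pi(\alpha,\beta)$ and lower semicontinuity of $\pi\mapsto\int\|x-y\|^2\,\mathrm d\pi$. If one prefers to avoid citing existence, take an $\varepsilon$-optimal coupling instead and let $\varepsilon\downarrow0$ at the end; nothing else in the argument changes.

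\textbf{Step 2 (projected couplings).} Fix $\theta\in\mathbb S^{k-1}$. The pushforward $(\pi_\theta\times\pi_\theta)_\#\pi^\star$ is a coupling of $(\pi_\theta)_\#\alpha$ and $(\pi_\theta)_\#\beta$. Hence
\[
W_2^2\big((\pi_\theta)_\#\alpha,(\pi_\theta)_\#\beta\big)\le\int\langle\theta,x-y\rangle^2\,\mathrm d\pi^\star(x,y).
\]

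\textbf{Step 3 (average over the sphere).} Integrate the bound from Step 2 against $\sigma_k$. The integrand is nonnegative and jointly measurable, so Tonelli's theorem allows the order of integration to be swapped. This gives
\[
\mathrm{SW}_2^2(\alpha,\beta)\le\int\!\!\int_{\mathbb S^{k-1}}\langle\theta,x-y\rangle^2\,\mathrm d\sigma_k(\theta)\,\mathrm d\pi^\star(x,y)=\frac1k\int\|x-y\|^2\,\mathrm d\pi^\star=\frac1k W_2^2(\alpha,\beta).
\]
The middle equality is exactly the second identity of Lemma~\ref{lem:sphere2}, applied with $v=x-y$. Multiplying both sides by $k$ yields the claim.

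The only points needing care are measurability of $\theta\mapsto W_2^2((\pi_\theta)_\#\alpha,(\pi_\theta)_\#\beta)$ and the justification of Tonelli. Both are mild. The map is continuous in $\theta$, because $|\pi_\theta(x)-\pi_{\theta'}(x)|\le\|\theta-\theta'\|\|x\|$ and second moments are finite. Finiteness of all quantities also follows from $\alpha,\beta\in\mathcal P_2$.

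Equality holds when the displacement $x-y$ under $\pi^\star$ is directionally isotropic. In that case each direction's projected coupling is optimal and carries exactly $1/k$ of the energy, which matches the remark following the statement.
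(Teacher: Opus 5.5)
Your proof is correct and follows the paper's argument step for step: push an optimal coupling through each $\pi_\theta$, bound the one-dimensional cost by that of the projected coupling, and average over $\theta$ via Lemma~\ref{lem:sphere2}; your measurability and Tonelli remarks are a welcome addition. Your closing equality remark is wrong, though: the averaged $1/k$ split in Step~3 is an identity for every displacement, so equality holds iff for $\sigma_k$-a.e.\ $\theta$ the projected coupling in Step~2 is itself monotone (as for $T(x)=cx+b$ with $c\ge 0$, including the deterministic displacement $b$ when $c=1$), and isotropy of $T(X)-X$ does not ensure this---for $X\sim\mathcal N(0,I_2)$ and $T(x)=\mathrm{diag}(3/2,1/2)\,x$ one has $T(X)-X\sim\mathcal N(0,I_2/4)$, yet the projected couplings off the coordinate axes are not monotone and $2\,\mathrm{SW}_2^2<W_2^2=1/2$.
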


\begin{proof}
Let $\pi^\star\in\Pi(\alpha,\beta)$ be an optimal coupling achieving $W_2^2(\alpha,\beta)$, and let $(X,Y)\sim\pi^\star$.
Fix $\theta\in\mathbb S^{k-1}$.
Then $(\pi_\theta(X),\pi_\theta(Y))$ is a coupling of $(\pi_\theta)_\#\alpha$ and $(\pi_\theta)_\#\beta$, so
\[
W_2^2\big((\pi_\theta)_\#\alpha,(\pi_\theta)_\#\beta\big)
\ \le\ \E\big[\,|\langle \theta,X-Y\rangle|^2\,\big].
\]
Integrating over $\theta$ and applying Lemma~\ref{lem:sphere2} yields
\[
\mathrm{SW}_2^2(\alpha,\beta)
\le
\int_{\mathbb S^{k-1}} \E\big[\,|\langle \theta,X-Y\rangle|^2\,\big]\,\mathrm d\sigma_k(\theta)
=
\E\!\left[\int_{\mathbb S^{k-1}} |\langle \theta,X-Y\rangle|^2\,\mathrm d\sigma_k(\theta)\right]
=
\frac{1}{k}\E\|X-Y\|^2.
\]
Since $\E\|X-Y\|^2=W_2^2(\alpha,\beta)$ under $\pi^\star$, the claim follows.
\end{proof}

\subsection{Proof of Theorem~\ref{thm:css-certificate}}\label{app:thm-certificate}

We first prove a slightly stronger pointwise statement: for each fixed $z$ and each fixed subspace $V$, the CSS objective
$\mathcal L_z(V)$ is dominated by the full conditional quadratic transport radius.

\begin{lemma}[Orthogonal decomposition lower bound]\label{lem:orth-decomp}
Let $\mu,\nu\in\mathcal P_2(\R^d)$ and let $V\in\mathcal V_{d,r}$ with $k=d-r$.
Then
\[
W_2^2(\mu,\nu)\ \ge\ W_2^2(\mu^V,\nu^V)\ +\ W_2^2(\mu^{V^\perp},\nu^{V^\perp}).
\]
Consequently,
\[
W_2^2(\mu,\nu)\ \ge\ W_2^2(\mu^V,\nu^V)\ +\ k\,\mathrm{SW}_2^2(\mu^{V^\perp},\nu^{V^\perp}).
\]
\end{lemma}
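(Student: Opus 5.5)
The plan is to work directly at the level of couplings, as in the proof of Proposition~\ref{prop:sw-lb}. Choose an optimal coupling $\pi^\star\in\Pi(\mu,\nu)$; it exists by standard lower semicontinuity and tightness arguments, since $\mu,\nu\in\mathcal P_2(\R^d)$. Let $(X,Y)\sim\pi^\star$.

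The key identity is the Pythagorean splitting. Because $P_V+P_{V^\perp}=I$ and the two ranges are orthogonal,
\[
\|X-Y\|^2=\|P_V X-P_V Y\|^2+\|P_{V^\perp}X-P_{V^\perp}Y\|^2 .
\]
The pair $(P_VX,P_VY)$ has marginals $\mu^V$ and $\nu^V$, so it is a feasible coupling for the projected problem. Likewise, $(P_{V^\perp}X,P_{V^\perp}Y)$ couples $\mu^{V^\perp}$ and $\nu^{V^\perp}$. Since each Wasserstein cost is an infimum over couplings, we get
\[
\E\|P_VX-P_VY\|^2\ge W_2^2(\mu^V,\nu^V)
\quad\text{and}\quad
\E\|P_{V^\perp}X-P_{V^\perp}Y\|^2\ge W_2^2(\mu^{V^\perp},\nu^{V^\perp}).
\]
Taking expectations in the splitting and using $\E\|X-Y\|^2=W_2^2(\mu,\nu)$ then gives the first inequality. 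The only point needing care is that $W_2^2$ on the subspaces is computed with the ambient Euclidean norm restricted to $V$ and $V^\perp$. Identifying $V^\perp$ isometrically with $\R^k$ via an orthonormal basis makes this consistent with the definition of $\mathrm{SW}_2$ on $\R^k$. Also, no absolute continuity is needed, because we never use Brenier maps.

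For the consequence, apply Proposition~\ref{prop:sw-lb} on $V^\perp\cong\R^k$ with $\alpha=\mu^{V^\perp}$ and $\beta=\nu^{V^\perp}$; both are in $\mathcal P_2$ because projections are $1$-Lipschitz. This gives $k\,\mathrm{SW}_2^2(\mu^{V^\perp},\nu^{V^\perp})\le W_2^2(\mu^{V^\perp},\nu^{V^\perp})$, which we substitute into the first inequality. Here we should check that $\mathrm{SW}_2$ is invariant under the choice of orthonormal basis of $V^\perp$. This follows from the rotation invariance of $\sigma_k$.

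I expect no real obstacle. The one mildly delicate step is the existence of an optimal coupling, which we can sidestep entirely. Instead, take an $\varepsilon$-optimal coupling and let $\varepsilon\to0$.

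Theorem~\ref{thm:css-certificate} then follows from the lemma. Apply it with $\mu=\mu_{1\mid z}$ and $\nu=\mu_{0\mid z}$, take the supremum over $V$, and integrate over $Z$, using $W_2^2(\mu_1,\mu_0)\ge\E[W_2^2(\mu_{1\mid Z},\mu_{0\mid Z})]$. That last inequality holds because a coupling of $(\mu_1,\mu_0)$ preserving $Z$ is feasible only in the conditional sense, as in the paper's framework. Measurability of $z\mapsto \mathrm{LB}^\star(z)$ should be noted but is routine.
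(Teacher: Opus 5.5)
Your argument is correct and is the paper's proof. It uses the Pythagorean splitting of $\|X-Y\|^2$, pushes the coupling forward by $(P_V,P_V)$ and $(P_{V^\perp},P_{V^\perp})$ to get feasible couplings of the projected laws, and finishes with Proposition~\ref{prop:sw-lb} on $V^\perp\cong\R^k$; the paper takes an arbitrary $\pi\in\Pi(\mu,\nu)$ and passes to the infimum, which is exactly your $\varepsilon$-optimal variant, so no optimizer is needed.

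One caution on your closing aside, which lies outside this lemma: $W_2^2(\mu_1,\mu_0)\ge\E[W_2^2(\mu_{1\mid Z},\mu_{0\mid Z})]$ is false in general, since it fails whenever the optimal plan between $\mu_1$ and $\mu_0$ moves mass across $Z$-strata (with a common law of $Z$, gluing conditional couplings gives the reverse inequality). So only $\E[W_2^2(\mu_{1\mid Z},\mu_{0\mid Z})]\ge\E[\mathrm{LB}^\star(Z)]$ follows, as the paper's own remark in the proof of Theorem~\ref{thm:css-certificate} acknowledges.
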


\begin{proof}
Let $\pi\in\Pi(\mu,\nu)$ and let $(X,Y)\sim\pi$.
Since $V\perp V^\perp$ and $P_V+P_{V^\perp}=I$, we have the Pythagorean identity
\[
\|X-Y\|^2=\|P_V(X-Y)\|^2+\|P_{V^\perp}(X-Y)\|^2.
\]
Taking expectation under $\pi$ gives
\[
\E\|X-Y\|^2=\E\|P_V(X-Y)\|^2+\E\|P_{V^\perp}(X-Y)\|^2.
\]
Let $\pi^V:=(P_V,P_V)_\#\pi\in\Pi(\mu^V,\nu^V)$.
Then $\E\|P_V(X-Y)\|^2=\int\|u-v\|^2\,\mathrm d\pi^V(u,v)\ge W_2^2(\mu^V,\nu^V)$.
Likewise, letting $\pi^{V^\perp}:=(P_{V^\perp},P_{V^\perp})_\#\pi\in\Pi(\mu^{V^\perp},\nu^{V^\perp})$ gives
$\E\|P_{V^\perp}(X-Y)\|^2\ge W_2^2(\mu^{V^\perp},\nu^{V^\perp})$.
Therefore, for every $\pi\in\Pi(\mu,\nu)$,
\[
\int\|x-y\|^2\,\mathrm d\pi(x,y)\ \ge\ W_2^2(\mu^V,\nu^V)+W_2^2(\mu^{V^\perp},\nu^{V^\perp}).
\]
Taking the infimum over $\pi$ yields the first inequality.
The second follows by applying Proposition~\ref{prop:sw-lb} on the residual space $V^\perp\simeq\R^k$.
\end{proof}

\begin{proof}[Proof of Theorem~\ref{thm:css-certificate}]
Fix $z$ and abbreviate $\mu:=\mu_{1\mid z}$ and $\nu:=\mu_{0\mid z}$.
For any $V\in\mathcal V_{d,r}$, Lemma~\ref{lem:orth-decomp} yields
\[
W_2^2(\mu_{1\mid z},\mu_{0\mid z})
\ \ge\
W_2^2\big(\mu^V_{1\mid z},\mu^V_{0\mid z}\big)
\ +\
k\,\mathrm{SW}_2^2\big(\mu^{V^\perp}_{1\mid z},\mu^{V^\perp}_{0\mid z}\big)
\ =\ \mathcal L_z(V).
\]
Taking the supremum over $V\in\mathcal V_{d,r}$ gives \eqref{eq:css-sup}.

\medskip
\noindent
\textbf{Integrated (in-$z$) form.}
Since the inequality holds pointwise in $z$, we always have
\[
\E\!\left[\,W_2^2(\mu_{1\mid Z},\mu_{0\mid Z})\,\right]\ \ge\ \E\big[\mathrm{LB}^\star(Z)\big].
\]
If one additionally assumes that the target quantity in the application is the integrated conditional radius
$\E[W_2^2(\mu_{1\mid Z},\mu_{0\mid Z})]$ (as in several covariate-assisted PI reductions), then this is precisely the desired unconditional statement.

\medskip
\noindent
\textbf{Remark on $W_2^2(\mu_1,\mu_0)$.}
In general, $W_2^2(\mu_1,\mu_0)\ge \E[\mathrm{LB}^\star(Z)]$ need not hold for an arbitrary choice of $Z=h(X)$, because the optimal coupling between $\mu_1$ and $\mu_0$ may transport mass across different $z$-strata.
The inequality $W_2^2(\mu_1,\mu_0)\ge \E[\mathrm{LB}^\star(Z)]$ does hold under the additional structural condition that there exists an optimal coupling $\pi^\star\in\Pi(\mu_1,\mu_0)$ such that $h(X)=h(Y)$ $\pi^\star$-a.s.\ (equivalently, the optimal plan is adapted to the partition induced by $h$); in that case, the cost decomposes stratum-wise and integration yields the claim.
\end{proof}

\subsection{Proof of Lemma~\ref{lem:additivity}}\label{app:additivity}

\begin{proof}[Proof of Lemma~\ref{lem:additivity}]
We make explicit that $\mu_{a,U}\otimes \mu_{a,\perp}$ denotes the product measure on $U\times U^\perp$,
corresponding to independence between the $U$-component and the $U^\perp$-component.

Define the isometry $\mathsf T:U\times U^\perp\to\R^d$ by $\mathsf T(x,z)=x+z$.
Orthogonality implies that for all $(x,z),(x',z')\in U\times U^\perp$,
\[
\|\mathsf T(x,z)-\mathsf T(x',z')\|^2=\|x-x'\|^2+\|z-z'\|^2.
\]
Let $\nu_a:=\mu_{a,U}\otimes \mu_{a,\perp}$ on $U\times U^\perp$.
Then $\mathsf T_\#\nu_a=\mu_a$ and $\mathsf T$ preserves the quadratic cost, hence
\[
W_2^2(\mu_1,\mu_0)=W_2^2(\nu_1,\nu_0),
\]
where the right-hand side is computed on $U\times U^\perp$ with cost $(x,z),(x',z')\mapsto \|x-x'\|^2+\|z-z'\|^2$.

\emph{Lower bound.}
Let $\gamma\in\Pi(\nu_1,\nu_0)$ and write $((X_1,Z_1),(X_0,Z_0))\sim\gamma$.
Then
\[
\E\big[\|X_1-X_0\|^2+\|Z_1-Z_0\|^2\big]
\ \ge\ 
W_2^2(\mu_{1,U},\mu_{0,U})\ +\ W_2^2(\mu_{1,\perp},\mu_{0,\perp}),
\]
since $(X_1,X_0)$ is a coupling of $(\mu_{1,U},\mu_{0,U})$ and $(Z_1,Z_0)$ is a coupling of $(\mu_{1,\perp},\mu_{0,\perp})$.
Taking the infimum over $\gamma$ yields
\[
W_2^2(\nu_1,\nu_0)\ \ge\ W_2^2(\mu_{1,U},\mu_{0,U})\ +\ W_2^2(\mu_{1,\perp},\mu_{0,\perp}).
\]

\emph{Upper bound (attainment).}
Let $\gamma_U^\star\in\Pi(\mu_{1,U},\mu_{0,U})$ and $\gamma_\perp^\star\in\Pi(\mu_{1,\perp},\mu_{0,\perp})$ be optimal couplings.
Then $\gamma^\star:=\gamma_U^\star\otimes \gamma_\perp^\star$ is a coupling in $\Pi(\nu_1,\nu_0)$ and achieves
\[
\int \big(\|x-x'\|^2+\|z-z'\|^2\big)\,\mathrm d\gamma^\star
=
W_2^2(\mu_{1,U},\mu_{0,U})\ +\ W_2^2(\mu_{1,\perp},\mu_{0,\perp}).
\]
Hence equality holds in the lower bound. Mapping back through $\mathsf T$ gives the claimed additivity on $\R^d$.
The conditional statement follows identically by applying the argument at each fixed $z$.
\end{proof}

\subsection{Proof of Theorem~\ref{thm:bias-control}}\label{app:bias-control}

\begin{proof}[Proof of Theorem~\ref{thm:bias-control}]
Fix $z$ and abbreviate $U:=U(z)$ and $k:=k(z)$.
By definition,
\[
\Delta(z)=W_2^2(\mu_{1\mid z},\mu_{0\mid z})-\mathrm{LB}^\star(z)
\ \le\ 
W_2^2(\mu_{1\mid z},\mu_{0\mid z})-\mathcal L_z(U),
\]
since $\mathrm{LB}^\star(z)=\sup_V\mathcal L_z(V)\ge \mathcal L_z(U)$.

Under the product decomposition at $z$, Lemma~\ref{lem:additivity} yields
\[
W_2^2(\mu_{1\mid z},\mu_{0\mid z})
=
W_2^2\big(\mu^U_{1\mid z},\mu^U_{0\mid z}\big)
+
W_2^2\big(\mu^{U^\perp}_{1\mid z},\mu^{U^\perp}_{0\mid z}\big).
\]
Therefore,
\[
W_2^2(\mu_{1\mid z},\mu_{0\mid z})-\mathcal L_z(U)
=
W_2^2\big(\mu^{U^\perp}_{1\mid z},\mu^{U^\perp}_{0\mid z}\big)
-
k\,\mathrm{SW}_2^2\big(\mu^{U^\perp}_{1\mid z},\mu^{U^\perp}_{0\mid z}\big)
=
\mathsf{Def}_k\big(\mu^{U^\perp}_{1\mid z},\mu^{U^\perp}_{0\mid z}\big),
\]
which proves the upper bound on $\Delta(z)$.
Nonnegativity $\Delta(z)\ge 0$ is immediate from Theorem~\ref{thm:css-certificate}.
If the residual deficit vanishes, $\mathsf{Def}_k(\cdot,\cdot)=0$, then $\Delta(z)=0$ follows.
\end{proof}

\subsection{Proof of Theorem~\ref{thm:deficit-quant}}\label{app:deficit-quant}
\begin{assumption}[Regular residual transport]
\label{assump:regular-residual-transport}
Let $\alpha,\beta\in\mathcal P_2(\mathbb R^k)$ denote a residual transport pair. We assume:

\begin{enumerate}
    \item Brenier regularity: The measures $\alpha$ and $\beta$ are absolutely continuous, and the quadratic optimal transport map from $\alpha$ to $\beta$ is $T=\nabla\varphi$.
    \item Strong log-concavity of the source: The source $\alpha$ has density proportional to $\exp(-V)$, where $V$ is twice continuously differentiable and $\nabla^2 V(x)\succeq a I_k$ for some $a>0$.
    \item Uniform spectral bounds
on the residual transport: The Brenier potential $\varphi$ is twice continuously differentiable $\mu$-a.e., and its Hessian $J(x):=\nabla^2\varphi(x)$ satisfies the uniform bounds $1<m\le L<\infty$ such that
    \[
        m I_k \preceq J(x) \preceq L I_k
        \qquad \text{for }\alpha\text{-a.e. }x.
    \]
    \item One-dimensional monotonicity: For $\sigma_k$-a.e. $\theta\in\mathbb S^{k-1}$, the regression map
    \[
        u \mapsto \mathbb E[\langle \theta,T(X)\rangle\mid \langle\theta,X\rangle=u]
    \]
    is nondecreasing.
    \item Whitening: The residual source is centered and whitened: $\mathbb E[X]=0$ and $\mathrm{Cov}(X)=I_k$ for $X\sim\alpha$.
\end{enumerate}
\end{assumption}
In this subsection we work on a residual space $\R^k$.
Let $\mu,\nu\in\mathcal P_2(\R^k)$ be absolutely continuous and let $T=\nabla\varphi$ be the Brenier map pushing $\mu$ to $\nu$.
Let $X\sim\mu$ and define the displacement $\Delta:=T(X)-X$ so that $W_2^2(\mu,\nu)=\E\|\Delta\|^2$.
For $\theta\in\mathbb S^{k-1}$, define the scalar projections
\[
U_\theta:=\langle \theta,X\rangle,\qquad
V_\theta:=\langle \theta,T(X)\rangle=\langle \theta,X+\Delta\rangle.
\]

\begin{lemma}[Deficit as an average one-dimensional coupling gap]\label{lem:deficit-gap}
With the above notation,
\[
\mathsf{Def}_k(\mu,\nu)
=
k\,\E_{\theta\sim\sigma_k}\Big[\,\E|V_\theta-U_\theta|^2\ -\ W_2^2(\mu_\theta,\nu_\theta)\Big],
\]
where $\mu_\theta:=(\pi_\theta)_\#\mu$ and $\nu_\theta:=(\pi_\theta)_\#\nu$.
\end{lemma}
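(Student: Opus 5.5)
The identity follows by rewriting $W_2^2(\mu,\nu)$ as an average over directions of one-dimensional squared displacements. Then we subtract the definition of $k\,\mathrm{SW}_2^2$ term by term.

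\textbf{Step 1 (rewrite the full cost).} Since $T$ is the Brenier map, the coupling $(X,T(X))$ is optimal, so
\[
W_2^2(\mu,\nu)=\E\|\Delta\|^2 .
\]
Apply Lemma~\ref{lem:sphere2} pointwise with $v=\Delta$: $\|\Delta\|^2=k\int_{\mathbb S^{k-1}}\langle\theta,\Delta\rangle^2\,\mathrm d\sigma_k(\theta)$. Since $\langle\theta,\Delta\rangle=V_\theta-U_\theta$, Tonelli's theorem (everything is nonnegative and $\E\|\Delta\|^2<\infty$ because $\mu,\nu\in\mathcal P_2$) gives
\[
W_2^2(\mu,\nu)=k\,\E_{\theta\sim\sigma_k}\,\E|V_\theta-U_\theta|^2 .
\]

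\textbf{Step 2 (subtract the sliced term).} By definition \eqref{eq:sw-def},
\[
k\,\mathrm{SW}_2^2(\mu,\nu)=k\,\E_{\theta\sim\sigma_k} W_2^2(\mu_\theta,\nu_\theta).
\]
Both averages are finite: $W_2^2(\mu_\theta,\nu_\theta)\le \E|V_\theta-U_\theta|^2$, and the latter integrates to $W_2^2(\mu,\nu)/k$. Hence subtracting inside a single $\theta$-expectation is legitimate, and it yields exactly the claimed expression for $\mathsf{Def}_k(\mu,\nu)$ in \eqref{eq:deficit}.

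As a by-product, the integrand is pointwise nonnegative. This holds because $(U_\theta,V_\theta)$ is a coupling of $(\mu_\theta,\nu_\theta)$, which recovers $\mathsf{Def}_k\ge 0$ and shows the deficit is an average of one-dimensional suboptimality gaps of the projected Brenier coupling.

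There is no real obstacle. The only care needed is justifying the interchange of integrals and the subtraction of the two averages, both of which follow from the finiteness of second moments.
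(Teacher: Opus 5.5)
Your proposal is correct and follows essentially the same route as the paper: rewrite $W_2^2(\mu,\nu)=\E\|\Delta\|^2$ as $k\,\E_\theta\E|V_\theta-U_\theta|^2$ via Lemma~\ref{lem:sphere2}, then subtract $k\,\mathrm{SW}_2^2$ from its definition. Your extra remarks on Tonelli, finiteness of both $\theta$-averages, and pointwise nonnegativity of the integrand are sound additions that the paper leaves implicit.
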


\begin{proof}
By definition and Lemma~\ref{lem:sphere2},
\[
W_2^2(\mu,\nu)=\E\|\Delta\|^2
=
k\,\E_{\theta}\E\big[|\langle \theta,\Delta\rangle|^2\big]
=
k\,\E_\theta \E|V_\theta-U_\theta|^2.
\]
On the other hand, $\mathrm{SW}_2^2(\mu,\nu)=\E_\theta W_2^2(\mu_\theta,\nu_\theta)$ by definition.
Subtracting gives the identity.
\end{proof}
\begin{lemma}[Gap bound via monotone regression]\label{lem:gap-monotone}
Assume that for $\sigma_k$-a.e.\ $\theta$ the regression function
$m_\theta(u):=\E[V_\theta\mid U_\theta=u]$ is non-decreasing.
Then for such $\theta$,
\begin{equation}\label{eq:gap-monotone}
\E|V_\theta-U_\theta|^2 - W_2^2(\mu_\theta,\nu_\theta)
\le
2\sqrt{\E|V_\theta-U_\theta|^2}\,
\sqrt{\E\big[\mathrm{Var}(V_\theta\mid U_\theta)\big]}.
\end{equation}
\end{lemma}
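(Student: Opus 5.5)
The plan is to compare the given coupling $(U_\theta,V_\theta)$ with the intermediate variable $m_\theta(U_\theta)$, which is a monotone function of $U_\theta$, and then apply the triangle inequality for $W_2$ on the line. Fix a $\theta$ for which $m_\theta$ is non-decreasing and abbreviate $U=U_\theta$, $V=V_\theta$, $m=m_\theta$. Set
\[
a:=\sqrt{\E|V-U|^2},\qquad s:=\sqrt{\E[\mathrm{Var}(V\mid U)]}=\sqrt{\E|V-m(U)|^2}.
\]

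\textbf{Step 1 (orthogonal splitting).} Write $V-U=(m(U)-U)+(V-m(U))$. The second summand has zero conditional mean given $U$, and the first summand is $\sigma(U)$-measurable, so the cross term vanishes. Hence $\E|m(U)-U|^2=a^2-s^2$. The needed integrability follows from $\mu,\nu\in\mathcal P_2$ and the $L^2$-contractivity of conditional expectation.

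\textbf{Step 2 (one-dimensional optimality of the monotone coupling).} Let $\rho:=\Law(m(U))$. Because $u\mapsto m(u)$ is non-decreasing, the coupling $(U,m(U))$ is comonotone. On $\R$, a comonotone coupling is optimal for the quadratic cost; equivalently, it is the quantile coupling, since $m(F_U^{-1}(t))$ is a quantile function of $\rho$. Therefore $W_2(\mu_\theta,\rho)=\sqrt{a^2-s^2}$.

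\textbf{Step 3 (triangle inequality).} The pair $(m(U),V)$ is a coupling of $\rho$ and $\nu_\theta$, so $W_2(\rho,\nu_\theta)\le s$. The triangle inequality for $W_2$ then gives
\[
W_2(\mu_\theta,\nu_\theta)\ \ge\ \sqrt{a^2-s^2}-s.
\]
There are two cases.
\begin{itemize}
\item If $\sqrt{a^2-s^2}\ge s$, squaring gives $W_2^2(\mu_\theta,\nu_\theta)\ge a^2-2s\sqrt{a^2-s^2}\ge a^2-2as$. This is exactly \eqref{eq:gap-monotone}.
\item Otherwise $a^2<2s^2$, so $a<\sqrt2\,s<2s$. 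Then $a^2-2as<0\le W_2^2(\mu_\theta,\nu_\theta)$, and \eqref{eq:gap-monotone} holds trivially.
\end{itemize}

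The step needing the most care is Step 2. It requires justifying that the comonotone coupling is optimal even when $m$ is only non-decreasing, so that it has flat pieces and $\rho$ may have atoms. I would handle this with the standard quantile representation. Let $\tau$ be uniform on $(0,1)$, so that $F_U^{-1}(\tau)\sim\mu_\theta$. Then $m(F_U^{-1}(\tau))$ has law $\rho$ and is a non-decreasing function of $\tau$, hence it equals $F_\rho^{-1}(\tau)$ almost surely. The classical formula $W_2^2=\int_0^1|F^{-1}-G^{-1}|^2\,\mathrm dt$ then applies directly.
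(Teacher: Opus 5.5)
Your proposal is correct and follows the paper's proof essentially step for step. Both arguments use the intermediate law of $m_\theta(U_\theta)$, optimality of the comonotone coupling on $\R$, the triangle inequality for $W_2$, and the splitting $\E|V_\theta-U_\theta|^2=\E(m_\theta(U_\theta)-U_\theta)^2+\E[\mathrm{Var}(V_\theta\mid U_\theta)]$. The differences are cosmetic: you close with an explicit two-case check where the paper bounds $a_\theta^2+b_\theta^2-(a_\theta-b_\theta)_+^2$ directly, and you add the quantile-function justification of comonotone optimality when $m_\theta$ has flat pieces, which the paper takes for granted.
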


\begin{proof}
Fix $\theta$ and define
$M_\theta:=m_\theta(U_\theta)$, $\rho_\theta:=\Law(M_\theta)$, and
$d_\theta:=W_2(\mu_\theta,\nu_\theta)$.
By construction, the coupling $(U_\theta,M_\theta)$ is supported on the
graph of the non-decreasing map $m_\theta$.
In one dimension with quadratic cost, monotone rearrangement is optimal, so
$$
W_2^2(\mu_\theta,\rho_\theta)
=
\E(M_\theta-U_\theta)^2
=:a_\theta^2.
$$
Moreover, $(M_\theta,V_\theta)$ is a valid coupling of
$\rho_\theta$ and $\nu_\theta$, hence
$$
W_2^2(\rho_\theta,\nu_\theta)
\le
\E(V_\theta-M_\theta)^2
=
\E\big[\mathrm{Var}(V_\theta\mid U_\theta)\big]
=:b_\theta^2.
$$
The reverse triangle inequality gives
$$
d_\theta
\ge
W_2(\mu_\theta,\rho_\theta)-W_2(\rho_\theta,\nu_\theta)
\ge
a_\theta-b_\theta.
$$
Since also $d_\theta\ge0$, we have
$$
d_\theta\ge (a_\theta-b_\theta)_+,
\qquad (x)_+:=\max\{x,0\}.
$$

Finally, by the conditional variance decomposition,
$$
\E|V_\theta-U_\theta|^2
=
\E(M_\theta-U_\theta)^2+\E(V_\theta-M_\theta)^2
=
a_\theta^2+b_\theta^2.
$$
Therefore,
\begin{align*}
\E|V_\theta-U_\theta|^2 - W_2^2(\mu_\theta,\nu_\theta)
\le
a_\theta^2+b_\theta^2-(a_\theta-b_\theta)_+^2  
\le
2b_\theta\sqrt{a_\theta^2+b_\theta^2}.
\end{align*}

Substituting the definitions of $a_\theta$ and $b_\theta$ gives
\eqref{eq:gap-monotone}.
\end{proof}

\begin{lemma}[Conditional Poincar\'e on hyperplane fibers]\label{lem:cond-poincare}
Assume $\mu$ has density $d\mu(x)=Z^{-1}e^{-V(x)}dx$ with $V\in C^2(\R^k)$ and $\nabla^2V\succeq \alpha I_k$.
Let $\theta\in\mathbb S^{k-1}$ and $U_\theta=\langle \theta,X\rangle$ for $X\sim\mu$.
Then for every $C^1$ function $f:\R^k\to\R$ with $\E f(X)^2<\infty$,
\[
\E\big[\mathrm{Var}(f(X)\mid U_\theta)\big]
\ \le\ \frac{1}{\alpha}\,\E\big[\|(I-\theta\theta^\top)\nabla f(X)\|^2\big].
\]
\end{lemma}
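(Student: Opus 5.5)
The plan is to reduce to a one-dimensional Poincaré inequality on each fiber $\{U_\theta=u\}$ by using the product structure of the conditional law in an orthonormal frame adapted to $\theta$. Complete $\theta$ to an orthonormal basis and write $X=U_\theta\theta+W$ with $W:=(I-\theta\theta^\top)X$ taking values in $\theta^\perp\simeq\R^{k-1}$. Conditional on $U_\theta=u$, the law of $W$ has density on $\theta^\perp$ proportional to $w\mapsto e^{-V(u\theta+w)}$; this is a regular conditional distribution, obtained from Fubini in the adapted coordinates. The Hessian of $w\mapsto V(u\theta+w)$ on $\theta^\perp$ is the compression $(I-\theta\theta^\top)\nabla^2V(I-\theta\theta^\top)$ restricted to $\theta^\perp$, which is still $\succeq\alpha I_{k-1}$. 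Hence each fiber measure $\mu_u$ is $\alpha$-strongly log-concave on $\R^{k-1}$.

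Next, apply the Brascamp--Lieb (equivalently Bakry--Émery) Poincaré inequality to each fiber. For an $\alpha$-strongly log-concave probability measure on $\R^{k-1}$ and a $C^1$ function $g$ with finite second moment, $\mathrm{Var}_{\mu_u}(g)\le\alpha^{-1}\E_{\mu_u}\|\nabla g\|^2$. Take $g(w):=f(u\theta+w)$. Its gradient in $w$ is $(I-\theta\theta^\top)\nabla f(u\theta+w)$, so
\[
\mathrm{Var}(f(X)\mid U_\theta=u)\le \frac1\alpha\,\E\big[\|(I-\theta\theta^\top)\nabla f(X)\|^2\mid U_\theta=u\big]
\]
for a.e. $u$. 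Integrating over the law of $U_\theta$ and using the tower property gives the claim.

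The step needing the most care is integrability. If $\E\|\nabla f(X)\|^2=\infty$ the right side is infinite and there is nothing to prove, so assume it is finite. Because $\E f(X)^2<\infty$, Fubini gives $\E_{\mu_u}g^2<\infty$ and $\E_{\mu_u}\|\nabla g\|^2<\infty$ for a.e. $u$, so the fiberwise Poincaré inequality applies. If one prefers to avoid quoting Brascamp--Lieb for non-smooth-at-infinity $g$, prove it first for $g\in C_c^\infty$ (or Lipschitz) and extend by truncation and mollification with Fatou's lemma. Strict positivity and finiteness of the fiber normalizing constant $\int e^{-V(u\theta+w)}dw$ follow from strong convexity of $V$, which gives quadratic growth.
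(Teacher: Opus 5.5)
Your proof is correct and follows the same route as the paper. The paper's argument has the same steps: the fiber laws on $\{U_\theta=u\}$ are $\alpha$-strongly log-concave because the compression of $\nabla^2V$ to $\theta^\perp$ stays $\succeq\alpha I$; Brascamp--Lieb/Bakry--\'Emery then gives a fiberwise Poincar\'e inequality in terms of the tangential gradient $(I-\theta\theta^\top)\nabla f$; and integrating over the law of $U_\theta$ finishes. You make the disintegration and integrability explicit where the paper is terse.

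Two small slips do not affect validity. First, the fibers are $(k-1)$-dimensional, not one-dimensional. Second, the trivial case should be $\E\|(I-\theta\theta^\top)\nabla f(X)\|^2=\infty$ rather than $\E\|\nabla f(X)\|^2=\infty$. In fact no such reduction is needed, since the fiberwise inequality holds in $[0,\infty]$.
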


\begin{proof}
Let $\mu(\cdot\mid U_\theta=u)$ denote a regular conditional version of $\mu$ given $U_\theta=u$.
Since $\mu$ is $\alpha$-strongly log-concave on $\R^k$, its restriction to each affine hyperplane $\{x:\langle\theta,x\rangle=u\}$
remains $\alpha$-strongly log-concave in the tangential directions (the Hessian lower bound persists upon restriction).
By the Brascamp--Lieb (or Bakry--\'Emery) inequality on the hyperplane, each conditional measure satisfies the Poincar\'e inequality
\[
\mathrm{Var}_{\mu(\cdot\mid U_\theta=u)}(f)
\ \le\ \frac{1}{\alpha}\int \|(I-\theta\theta^\top)\nabla f(x)\|^2\,\mu(dx\mid U_\theta=u).
\]
Integrating both sides with respect to the law of $U_\theta$ yields the claim.
\end{proof}

\begin{lemma}[Spherical average of the tangential Jacobian action]\label{lem:sphere-tangent}
Let $A\in\R^{k\times k}$ be symmetric.
Then
\[
\E_{\theta\sim\sigma_k}\Big[\|(I-\theta\theta^\top)A\theta\|^2\Big]
=
\frac{\mathrm{tr}(A^2)}{k+2}\,
\delta(A),
\qquad
\delta(A):=1-\frac{\mathrm{tr}(A)^2}{k\,\mathrm{tr}(A^2)}\in\Big[0,1-\frac{1}{k}\Big].
\]
\end{lemma}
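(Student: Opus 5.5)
The plan is a direct computation that reduces the spherical average to the second and fourth moments of $\theta\sim\sigma_k$. Since $I-\theta\theta^\top$ is an orthogonal projector when $\|\theta\|=1$, we have
\[
\|(I-\theta\theta^\top)A\theta\|^2=\|A\theta\|^2-\langle\theta,A\theta\rangle^2=\theta^\top A^2\theta-(\theta^\top A\theta)^2 .
\]
We therefore need two expectations. The first is $\E[\theta^\top A^2\theta]=\mathrm{tr}(A^2)/k$, which follows immediately from Lemma~\ref{lem:sphere2} because $A^2$ is symmetric. The second, $\E[(\theta^\top A\theta)^2]$, requires the fourth-moment tensor of the uniform measure on the sphere.

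For the fourth moments I would use the Gaussian representation $\theta=g/\|g\|$ with $g\sim\mathcal N(0,I_k)$. Here $\|g\|$ is independent of $g/\|g\|$, so $\E[g_ig_jg_lg_m]=\E\|g\|^4\,\E[\theta_i\theta_j\theta_l\theta_m]$. Isserlis' theorem gives $\E[g_ig_jg_lg_m]=\delta_{ij}\delta_{lm}+\delta_{il}\delta_{jm}+\delta_{im}\delta_{jl}$, and $\E\|g\|^4=k(k+2)$ because $\|g\|^2\sim\chi^2_k$. Hence
\[
\E[\theta_i\theta_j\theta_l\theta_m]=\frac{\delta_{ij}\delta_{lm}+\delta_{il}\delta_{jm}+\delta_{im}\delta_{jl}}{k(k+2)}.
\]
Contracting this tensor with $A_{ij}A_{lm}$ and using the symmetry of $A$ gives
\[
\E[(\theta^\top A\theta)^2]=\frac{\mathrm{tr}(A)^2+2\,\mathrm{tr}(A^2)}{k(k+2)}.
\]

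Subtracting the two expectations gives
\[
\frac{\mathrm{tr}(A^2)}{k}-\frac{\mathrm{tr}(A)^2+2\,\mathrm{tr}(A^2)}{k(k+2)}
=\frac{k\,\mathrm{tr}(A^2)-\mathrm{tr}(A)^2}{k(k+2)}
=\frac{\mathrm{tr}(A^2)}{k+2}\,\delta(A),
\]
which is the stated identity. When $A=0$ both sides vanish under any convention for $\delta(0)$.

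For the range of $\delta(A)$, write $\lambda_1,\dots,\lambda_k$ for the eigenvalues of $A$. The lower bound $\delta(A)\ge0$ is Cauchy--Schwarz, $(\sum_i\lambda_i)^2\le k\sum_i\lambda_i^2$. The upper bound $\delta(A)\le 1-1/k$ is equivalent to $\mathrm{tr}(A)^2\ge \mathrm{tr}(A^2)$. This holds when $A\succeq0$, since then $(\sum_i\lambda_i)^2\ge\sum_i\lambda_i^2$ because all cross terms are nonnegative. It can fail for indefinite symmetric $A$: a trace-zero $A\neq0$ gives $\delta(A)=1$. So the identity holds for every symmetric $A$, but the interval claim needs $A\succeq0$. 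That is exactly the situation where the lemma is applied, with $A=J(x)$ and $mI_k\preceq J(x)$ by Assumption~\ref{assump:regular-residual-transport}, so I would state this restriction on the range explicitly.

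The only nontrivial step is obtaining the spherical fourth-moment tensor. The Gaussian-radial independence argument settles it cleanly, and everything else is bookkeeping.
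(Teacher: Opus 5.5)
Your argument is correct and follows the paper's proof: the same expansion $\theta^\top A^2\theta-(\theta^\top A\theta)^2$, Lemma~\ref{lem:sphere2} for the first term, and the spherical fourth-moment identity for the second. The paper only cites that identity as standard, whereas you derive it from the Gaussian representation. Your caveat on the range also goes beyond the paper, which never checks it, and it is right: $\delta(A)\le 1-1/k$ is equivalent to $\mathrm{tr}(A)^2\ge\mathrm{tr}(A^2)$, which fails for indefinite $A$ (e.g.\ a trace-zero $A\neq 0$ gives $\delta(A)=1$). So the interval claim should be restricted to $A\succeq 0$, which still covers the lemma's only use, $A=J(x)\succeq mI_k$.
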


\begin{proof}
Using symmetry of $A$,
\[
\|(I-\theta\theta^\top)A\theta\|^2
=
\|A\theta\|^2-(\theta^\top A\theta)^2
=
\theta^\top A^2\theta-(\theta^\top A\theta)^2.
\]
By Lemma~\ref{lem:sphere2}, $\E_\theta[\theta^\top A^2\theta]=\mathrm{tr}(A^2)/k$.
For the second term, the standard fourth-moment identity for $\theta\sim\sigma_k$ gives
\[
\E_\theta[(\theta^\top A\theta)^2]
=\frac{\mathrm{tr}(A)^2+2\,\mathrm{tr}(A^2)}{k(k+2)}.
\]
Subtracting yields
\[
\E_\theta\|(I-\theta\theta^\top)A\theta\|^2
=
\frac{\mathrm{tr}(A^2)}{k}
-\frac{\mathrm{tr}(A)^2+2\,\mathrm{tr}(A^2)}{k(k+2)}
=
\frac{k\,\mathrm{tr}(A^2)-\mathrm{tr}(A)^2}{k(k+2)}
=
\frac{\mathrm{tr}(A^2)}{k+2}\Big(1-\frac{\mathrm{tr}(A)^2}{k\,\mathrm{tr}(A^2)}\Big),
\]
as claimed.
\end{proof}

\begin{lemma}[Strong monotonicity implies a nontrivial transport scale]\label{lem:w2-lower-scale}
Assume that $T=\nabla\varphi$ is $C^1$ and its Jacobian satisfies $\nabla T(x)\succeq m I_k$ for all $x\in\R^k$ with some $m>1$.
Then $T$ is $m$-strongly monotone:
\[
\langle T(x)-T(y),x-y\rangle\ \ge\ m\|x-y\|^2\qquad\text{for all }x,y\in\R^k.
\]
If moreover $\mathrm{Cov}(X)=I_k$ for $X\sim\mu$, then
\[
W_2^2(\mu,\nu)=\E\|T(X)-X\|^2\ \ge\ k(m-1)^2.
\]
\end{lemma}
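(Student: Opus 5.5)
The first claim, $m$-strong monotonicity, comes from integrating the Jacobian along the segment joining $y$ to $x$. Fix $x,y\in\R^k$ and set $h:=x-y$ and $g(t):=T(y+th)$ for $t\in[0,1]$. Since $T$ is $C^1$, the fundamental theorem of calculus gives
\[
T(x)-T(y)=\int_0^1 \nabla T(y+th)\,h\,\mathrm dt .
\]
Taking the inner product with $h$ and using $\nabla T(y+th)\succeq mI_k$ pointwise gives
\[
\langle T(x)-T(y),h\rangle=\int_0^1 h^\top\nabla T(y+th)\,h\,\mathrm dt\ \ge\ m\|h\|^2 .
\]
The Jacobian $\nabla T=\nabla^2\varphi$ is symmetric, but only the quadratic form is needed here, so symmetry plays no role.

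For the transport-scale bound, I would write the displacement map as $D(x):=T(x)-x$. Its Jacobian satisfies $\nabla D=\nabla T-I\succeq (m-1)I_k$ with $m-1>0$, so the first part applied to $D$ makes $D$ $(m-1)$-strongly monotone. Cauchy--Schwarz then turns this into a Lipschitz-type lower bound:
\[
\|D(x)-D(y)\|\ \ge\ (m-1)\|x-y\| .
\]
Next, let $X,X'$ be independent copies with law $\mu$. Then
\[
\E\|D(X)-D(X')\|^2 \ \ge\ (m-1)^2\,\E\|X-X'\|^2=(m-1)^2\cdot 2\,\mathrm{tr}\,\mathrm{Cov}(X)=2k(m-1)^2 ,
\]
using $\mathrm{Cov}(X)=I_k$. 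The left side equals $2\,\mathrm{tr}\,\mathrm{Cov}(D(X))\le 2\,\E\|D(X)\|^2$, since the variance is at most the second moment. Combining the two displays gives $\E\|T(X)-X\|^2\ge k(m-1)^2$.

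It remains to identify this quantity with $W_2^2(\mu,\nu)$. This is exactly \eqref{eq:monge}, because $T$ is the Brenier map. Finiteness of $\E\|D(X)\|^2$ is not an issue: if it is infinite the inequality is trivial, and otherwise all the moments used above are finite.

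The only delicate step is passing from the pointwise Jacobian bound to a second-moment bound without losing the mean. The trick of comparing independent copies handles this, because it cancels the unknown mean displacement $\E D(X)$, and that cancellation is precisely where the whitening $\mathrm{Cov}(X)=I_k$ enters. A direct attempt such as bounding $\E\|D(X)\|^2$ via $\langle D(X),X\rangle$ would need $\E[X]=0$ and more care, so I would use the independent-copies argument.
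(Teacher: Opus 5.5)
Your proposal is correct and follows the paper's proof essentially step for step: the segment/fundamental-theorem-of-calculus argument for strong monotonicity, then independent copies $X,X'$, the whitening identity $\E\|X-X'\|^2=2k$, and the bound $\E\|\Delta-\Delta'\|^2\le 2\E\|\Delta\|^2$. The only cosmetic difference is where Cauchy--Schwarz enters. You apply it pointwise in $\R^k$ to get $\|D(x)-D(y)\|\ge(m-1)\|x-y\|$ before integrating, whereas the paper applies it in $L^2$ after taking expectations of $\langle \Delta-\Delta',X-X'\rangle$; both give $\E\|\Delta-\Delta'\|^2\ge 2k(m-1)^2$.
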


\begin{proof}
For $x,y\in\R^k$, by the fundamental theorem of calculus along the segment $y+t(x-y)$,
\[
T(x)-T(y)=\int_0^1 \nabla T\big(y+t(x-y)\big)\,(x-y)\,\mathrm dt.
\]
Taking inner products with $(x-y)$ and using $\nabla T\succeq mI_k$ yields
\[
\langle T(x)-T(y),x-y\rangle
=\int_0^1 (x-y)^\top \nabla T(y+t(x-y))(x-y)\,\mathrm dt
\ge \int_0^1 m\|x-y\|^2\,\mathrm dt
= m\|x-y\|^2.
\]

For the lower bound on $W_2^2$, let $X,X'\stackrel{i.i.d.}{\sim}\mu$ and define $\Delta=T(X)-X$, $\Delta'=T(X')-X'$.
Strong monotonicity gives
\[
\langle T(X)-T(X'),X-X'\rangle\ge m\|X-X'\|^2.
\]
Since $T(X)-T(X')=(X-X')+(\Delta-\Delta')$, we obtain
\[
\langle \Delta-\Delta',X-X'\rangle\ \ge\ (m-1)\|X-X'\|^2.
\]
Taking expectations and applying Cauchy--Schwarz yields
\[
(m-1)\E\|X-X'\|^2
\ \le\ \E\langle \Delta-\Delta',X-X'\rangle
\ \le\ \sqrt{\E\|\Delta-\Delta'\|^2}\,\sqrt{\E\|X-X'\|^2}.
\]
Since $\mathrm{Cov}(X)=I_k$, we have $\E\|X-X'\|^2=2\,\mathrm{tr}(\mathrm{Cov}(X))=2k$.
Thus $\E\|\Delta-\Delta'\|^2\ge 2k(m-1)^2$.
Finally, $\E\|\Delta-\Delta'\|^2\le 2\E\|\Delta\|^2$ (because $\E\|\Delta-\Delta'\|^2=2(\E\|\Delta\|^2-\|\E\Delta\|^2)\le 2\E\|\Delta\|^2$),
hence $\E\|\Delta\|^2\ge k(m-1)^2$.
\end{proof}

\begin{proof}[Proof of Theorem~\ref{thm:deficit-quant}]
Let $J(x):=\nabla T(x)=\nabla^2\varphi(x)$ and define
\[
r_{\mathrm{eff}}(J(x)):=\frac{\mathrm{tr}(J(x))^2}{\mathrm{tr}(J(x)^2)},
\qquad
\delta(J(x)):=1-\frac{r_{\mathrm{eff}}(J(x))}{k},
\qquad
\bar\delta:=\E[\delta(J(X))].
\]
By Lemma~\ref{lem:deficit-gap} and Lemma~\ref{lem:gap-monotone},
\begin{align*}
\mathsf{Def}_k(\mu,\nu)
&=k\,\E_\theta\Big[\E|V_\theta-U_\theta|^2-W_2^2(\mu_\theta,\nu_\theta)\Big]\\
&\le 2k\,\E_\theta\Big[\sqrt{\E|V_\theta-U_\theta|^2}\,\sqrt{\E\mathrm{Var}(V_\theta\mid U_\theta)}\Big]\\
&\le 2k\,\sqrt{\E_\theta\E|V_\theta-U_\theta|^2}\,\sqrt{\E_\theta\E\mathrm{Var}(V_\theta\mid U_\theta)},
\end{align*}
where the last step is Cauchy--Schwarz in $\theta$.

We next bound the two factors.
First, since $V_\theta-U_\theta=\langle \theta,\Delta\rangle$,
\[
\E_\theta\E|V_\theta-U_\theta|^2
=\E\!\left[\int_{\mathbb S^{k-1}} \langle \theta,\Delta\rangle^2\,\mathrm d\sigma_k(\theta)\right]
=\E\Big[\frac{\|\Delta\|^2}{k}\Big]
=\frac{1}{k}W_2^2(\mu,\nu),
\]
by Lemma~\ref{lem:sphere2}.

Second, apply Lemma~\ref{lem:cond-poincare} with $f_\theta(x):=\langle \theta,T(x)\rangle$.
Since $\nabla f_\theta(x)=J(x)\theta$ and $(I-\theta\theta^\top)\theta=0$,
\[
\E\mathrm{Var}(V_\theta\mid U_\theta)
\le \frac{1}{\alpha}\,\E\|(I-\theta\theta^\top)J(X)\theta\|^2.
\]
Averaging in $\theta$ and using Lemma~\ref{lem:sphere-tangent} gives
\[
\E_\theta\E\mathrm{Var}(V_\theta\mid U_\theta)
\le \frac{1}{\alpha}\,\E\left[\E_\theta\|(I-\theta\theta^\top)J(X)\theta\|^2\right]
=\frac{1}{\alpha}\,\E\left[\frac{\mathrm{tr}(J(X)^2)}{k+2}\,\delta(J(X))\right].
\]
Using the spectral upper bound $J(x)\preceq L I_k$ yields $\mathrm{tr}(J(X)^2)\le kL^2$, hence
\[
\E_\theta\E\mathrm{Var}(V_\theta\mid U_\theta)
\le
\frac{kL^2}{\alpha(k+2)}\,\bar\delta.
\]

Combining the bounds,
\[
\mathsf{Def}_k(\mu,\nu)
\le
2k\,\sqrt{\frac{1}{k}W_2^2(\mu,\nu)}\,
\sqrt{\frac{kL^2}{\alpha(k+2)}\bar\delta}
=
2\,W_2(\mu,\nu)\,\frac{kL}{\sqrt{\alpha(k+2)}}\,\sqrt{\bar\delta}.
\]
Finally, Lemma~\ref{lem:w2-lower-scale} implies $W_2(\mu,\nu)\ge (m-1)\sqrt{k}$ under the lower Jacobian bound $J\succeq mI_k$ and whitening $\mathrm{Cov}(X)=I_k$.
Therefore,
\[
\frac{\mathsf{Def}_k(\mu,\nu)}{W_2^2(\mu,\nu)}
\le
\frac{2\,\frac{kL}{\sqrt{\alpha(k+2)}}\,\sqrt{\bar\delta}}{(m-1)\sqrt{k}}
=
\underbrace{2\,\frac{L}{m-1}\,\sqrt{\frac{k}{\alpha(k+2)}}}_{=:C(\alpha,m,L,k)}
\sqrt{\bar\delta},
\]
which proves the theorem.
\end{proof}

\subsection{Proof of Theorem~\ref{thm:subspace-stability}}\label{app:subspace-stability}

We work in the unconditional setting for clarity; the argument is identical conditionally at each fixed $z$.
Let $M:=M(\mu_0,\mu_1)=\E[\Delta\Delta^\top]$ be the displacement second moment matrix under an optimal coupling,
and assume the ESTM product structure so that, in the orthogonal decomposition $U\oplus U^\perp$,
\[
M=\begin{pmatrix}M_{UU}&0\\0&M_{\perp\perp}\end{pmatrix},
\qquad
\lambda_{\min}(M_{UU})>\lambda_{\max}(M_{\perp\perp}).
\]

\begin{lemma}[Projected transport energy bound (Lemma~2.1)]
Assume $\mu_0,\mu_1$ are absolutely continuous and let $T$ be the Brenier map from $\mu_0$ to $\mu_1$.
Let $\Delta:=T(X)-X$ for $X\sim\mu_0$ and $M=\E[\Delta\Delta^\top]$.
Then for any $V\in\mathcal V_{d,r}$,
\[
W_2^2(\mu_0^V,\mu_1^V)\ \le\ \mathrm{tr}(P_VMP_V)=\mathrm{tr}(P_V M).
\]
\end{lemma}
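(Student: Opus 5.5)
The idea is to build an explicit coupling of the projected laws $\mu_0^V$ and $\mu_1^V$ from the Brenier coupling in $\R^d$, and then bound $W_2^2$ by its cost.

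Let $X\sim\mu_0$, and set $T(X)$, so that $(X,T(X))$ is an optimal coupling of $(\mu_0,\mu_1)$ and $\Delta=T(X)-X$. Since $P_V$ is linear and measurable, the pair $(P_VX,\,P_VT(X))$ has marginals $(P_V)_\#\mu_0=\mu_0^V$ and $(P_V)_\#\mu_1=\mu_1^V$. It is therefore an admissible element of $\Pi(\mu_0^V,\mu_1^V)$. By the definition \eqref{eq:w2def} of $W_2^2$ as an infimum over couplings,
\[
W_2^2(\mu_0^V,\mu_1^V)\le \E\|P_VT(X)-P_VX\|^2=\E\|P_V\Delta\|^2 .
\]

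The next step rewrites the right-hand side as a trace. We have
\[
\|P_V\Delta\|^2=\Delta^\top P_V^\top P_V\Delta=\Delta^\top P_V\Delta=\mathrm{tr}(P_V\Delta\Delta^\top),
\]
using $P_V^\top=P_V$ and $P_V^2=P_V$. Taking expectations is legitimate because $\E\|\Delta\|^2=W_2^2(\mu_0,\mu_1)<\infty$, since both measures lie in $\mathcal P_2$. Linearity of the trace then gives $\E\|P_V\Delta\|^2=\mathrm{tr}(P_VM)$.

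Finally, cyclicity of the trace and idempotence of $P_V$ give $\mathrm{tr}(P_VMP_V)=\mathrm{tr}(P_V^2M)=\mathrm{tr}(P_VM)$, which is the displayed equality. None of these steps is a real obstacle. The only points needing care are that absolute continuity of $\mu_0$ is what guarantees a Brenier map exists, so that $M$ is well defined, and that $M$ has finite entries by the second-moment assumption. If one prefers not to rely on a map, the same argument works verbatim with any optimal plan $\pi^\star$, defining $M=\E_{\pi^\star}[(Y-X)(Y-X)^\top]$.
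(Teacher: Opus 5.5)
Your proposal is correct and follows essentially the same route as the paper: push the Brenier coupling $(X,T(X))$ forward by $P_V$ to get an admissible coupling of $(\mu_0^V,\mu_1^V)$, then rewrite $\E\|P_V\Delta\|^2=\E[\Delta^\top P_V\Delta]$ as $\mathrm{tr}(P_VM)=\mathrm{tr}(P_VMP_V)$. Your added remarks on integrability and on replacing the map by an optimal plan are harmless extras. In fact any coupling of $(\mu_0,\mu_1)$ would give the same inequality with its own $M$.
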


\begin{proof}
Under the Brenier coupling $(X,T(X))$, $(P_VX,P_VT(X))$ is a coupling between $\mu_0^V$ and $\mu_1^V$.
Thus
\[
W_2^2(\mu_0^V,\mu_1^V)
\le \E\|P_V(T(X)-X)\|^2
=\E[\Delta^\top P_V\Delta]
=\mathrm{tr}(P_V\,\E[\Delta\Delta^\top])
=\mathrm{tr}(P_VMP_V).
\]
\end{proof}

\begin{proof}[Proof of Theorem~\ref{thm:subspace-stability}]
Let $U^\star\in\arg\max_{V\in\mathcal V_{d,r}}\mathcal L(V)$, where
\[
\mathcal L(V):=W_2^2(\mu_1^V,\mu_0^V)+k\,\mathrm{SW}_2^2(\mu_1^{V^\perp},\mu_0^{V^\perp}),\qquad k=d-r.
\]
By optimality, $\mathcal L(U^\star)\ge \mathcal L(U)$.

\emph{Step 1: a curvature lower bound for the signal term.}
Under the ESTM product structure, the optimal coupling decomposes along $U\oplus U^\perp$, hence
\[
W_2^2(\mu_1^U,\mu_0^U)=\mathrm{tr}(M_{UU}).
\]
For a general $V\in\mathcal V_{d,r}$, Lemma~\ref{lem:proj-energy} and block-diagonality of $M$ give
\[
W_2^2(\mu_1^V,\mu_0^V)
\le
\mathrm{tr}(P_VMP_V)
=
\mathrm{tr}(P_VP_UMP_{U}P_V)+\mathrm{tr}(P_VP_{U^\perp}M_{\perp\perp}P_{U^\perp}P_V).
\]
Let $B:=P_U P_V P_U$ be the compression of $P_V$ to $U$.
Then $0\preceq B\preceq I_U$ and $\mathrm{tr}(B)=\mathrm{tr}(P_UP_V)$.
Since $M_{UU}\succeq \lambda_{\min}(M_{UU}) I_U$,
\[
\mathrm{tr}(M_{UU})-\mathrm{tr}(B M_{UU})
=\mathrm{tr}((I_U-B)M_{UU})
\ge
\lambda_{\min}(M_{UU})\,\mathrm{tr}(I_U-B)
=\lambda_{\min}(M_{UU})\,\mathrm{tr}(P_UP_{V^\perp}).
\]
Moreover, $M_{\perp\perp}\preceq \lambda_{\max}(M_{\perp\perp}) I_{U^\perp}$ implies
\[
\mathrm{tr}(P_VP_{U^\perp}M_{\perp\perp}P_{U^\perp}P_V)
\le
\lambda_{\max}(M_{\perp\perp})\,\mathrm{tr}(P_VP_{U^\perp}).
\]
Combining, and using $\mathrm{tr}(P_UP_{V^\perp})=\mathrm{tr}(P_VP_{U^\perp})$,
we obtain the lower bound
\begin{equation}\label{eq:signal-curvature}
W_2^2(\mu_1^U,\mu_0^U)-W_2^2(\mu_1^V,\mu_0^V)
\ \ge\
\big(\lambda_{\min}(M_{UU})-\lambda_{\max}(M_{\perp\perp})\big)\,\mathrm{tr}(P_VP_{U^\perp}).
\end{equation}

\emph{Step 2: control the residual perturbation.}
By Proposition~\ref{prop:sw-lb} and Lemma~\ref{lem:orth-decomp},
\[
0\le k\,\mathrm{SW}_2^2(\mu_1^{V^\perp},\mu_0^{V^\perp})\le W_2^2(\mu_1^{V^\perp},\mu_0^{V^\perp})
\le \mathrm{tr}(P_{V^\perp}MP_{V^\perp})\le \mathrm{tr}(M_{\perp\perp})\le k\,\lambda_{\max}(M_{\perp\perp}),
\]
where we used $\dim(U^\perp)=k$ for the last inequality.
Therefore,
\[
k\,\mathrm{SW}_2^2(\mu_1^{U^\star{}^\perp},\mu_0^{U^\star{}^\perp})
-
k\,\mathrm{SW}_2^2(\mu_1^{U^\perp},\mu_0^{U^\perp})
\ \le\ k\,\lambda_{\max}(M_{\perp\perp}).
\]

\emph{Step 3: conclude by optimality.}
Since $\mathcal L(U^\star)\ge\mathcal L(U)$,
\[
W_2^2(\mu_1^U,\mu_0^U)-W_2^2(\mu_1^{U^\star},\mu_0^{U^\star})
\ \le\
k\,\mathrm{SW}_2^2(\mu_1^{U^\star{}^\perp},\mu_0^{U^\star{}^\perp})
-
k\,\mathrm{SW}_2^2(\mu_1^{U^\perp},\mu_0^{U^\perp})
\ \le\ k\,\lambda_{\max}(M_{\perp\perp}).
\]
Applying \eqref{eq:signal-curvature} with $V=U^\star$ gives
\[
\big(\lambda_{\min}(M_{UU})-\lambda_{\max}(M_{\perp\perp})\big)\,\mathrm{tr}(P_{U^\star}P_{U^\perp})
\ \le\ k\,\lambda_{\max}(M_{\perp\perp}).
\]
Finally, Lemma~\ref{lem:proj-overlap} implies $\mathrm{tr}(P_{U^\star}P_{U^\perp})=\frac12\|P_{U^\star}-P_U\|_F^2$,
hence
\[
\|P_{U^\star}-P_U\|_F^2
\ \le\
\frac{2k\,\lambda_{\max}(M_{\perp\perp})}{\lambda_{\min}(M_{UU})-\lambda_{\max}(M_{\perp\perp})},
\]
which is \eqref{eq:subspace-bound}.
\end{proof}

\subsection{Proof of Theorem~\ref{thm:finite-sample}}\label{app:finite-sample}

We provide a proof of Theorem~\ref{thm:finite-sample} in a form that is uniform over (possibly data-dependent) choices of subspaces,
so it applies directly to Algorithm~\ref{alg:css}.
Fix $z$ and abbreviate $\mu_a:=\mu_{a\mid z}$.
Let $n:=\min\{n_0,n_1\}$ and let $\hat\mu_a$ be the empirical laws (or the conditional empirical laws produced by the conditioning scheme).

Write the population and empirical CSS quantities (at the chosen subspace) as
\[
\mathrm{LB}(z)=S(z)+k\,R(z),
\qquad
\widehat{\mathrm{LB}}(z)=\widehat S(z)+k\,\widehat R_L(z),
\]
where $\widehat S(z)$ is the WPP signal estimator (the maximized projected $W_2^2$ value),
and $\widehat R_L(z)$ is the Monte Carlo sliced estimator in Algorithm~\ref{alg:css}.

\begin{lemma}[One-dimensional plug-in stability]\label{lem:1d-w2-stab}
Let $\lambda,\kappa\in\mathcal P_2(\R)$ satisfy a transport inequality $T_{p'}(\sigma^2)$ for some $p'\in[1,2]$,
and let $\lambda_n,\kappa_n$ be empirical laws based on $n$ i.i.d.\ samples.
Then there exists a constant $C_1>0$ (depending only on $p'$) such that
\[
\E\big|W_2^2(\lambda_n,\kappa_n)-W_2^2(\lambda,\kappa)\big|
\ \le\ C_1\,\sigma^2\,n^{-1/2}.
\]
\end{lemma}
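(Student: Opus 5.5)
We bound the error by the difference of squares,
\[
|W_2^2(\lambda_n,\kappa_n)-W_2^2(\lambda,\kappa)|
=|W_2(\lambda_n,\kappa_n)-W_2(\lambda,\kappa)|\,\big(W_2(\lambda_n,\kappa_n)+W_2(\lambda,\kappa)\big).
\]
By the triangle inequality for $W_2$, the first factor is at most $W_2(\lambda_n,\lambda)+W_2(\kappa_n,\kappa)$. The second factor is at most $2W_2(\lambda,\kappa)+W_2(\lambda_n,\lambda)+W_2(\kappa_n,\kappa)$. Taking expectations and applying Cauchy--Schwarz reduces the claim to two ingredients: a bound on $\E W_2^2(\lambda_n,\lambda)$, and a bound on $W_2(\lambda,\kappa)$, both expressed in terms of $\sigma$ and $n$.

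\textbf{Step 1 (population scale).} Under $T_{p'}(\sigma^2)$ with $p'\ge1$, each law has sub-Gaussian concentration: the Bobkov--G\"otze characterization of $T_1$ holds, and $T_{p'}$ implies $T_1$. Hence the centered second moments of $\lambda$ and $\kappa$ are $O(\sigma^2)$. Together with $W_2^2(\lambda,\kappa)\le 2\,\mathrm{Var}$-type terms plus the mean shift, this gives $W_2(\lambda,\kappa)\lesssim\sigma$. The mean shift does not scale with $\sigma$, so I would either absorb it into the constant (implicitly, as the statement does) or note that $W_2(\lambda,\kappa)$ enters the bound only multiplicatively.

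\textbf{Step 2 (one-dimensional empirical rate).} In one dimension, $W_2^2(\lambda_n,\lambda)=\int_0^1|F_n^{-1}-F^{-1}|^2$. By Bobkov--Ledoux, $\E W_2^2(\lambda_n,\lambda)\le \frac{C}{n}J_2(\lambda)$, where $J_2(\lambda)=\int F(1-F)/f$. For sub-Gaussian laws this functional need not be finite. I would therefore use the cruder rate $\E W_2^2(\lambda_n,\lambda)\lesssim\sigma^2 n^{-1/2}$, obtained by comparing $W_2^2\le \int|x|\,|F_n-F|\,dx$-type bounds with sub-Gaussian tails: $\E|F_n(x)-F(x)|\le\sqrt{F(1-F)/n}$, and $\int |x|\sqrt{F(1-F)}\,dx\lesssim\sigma^2$ by the sub-Gaussian tails. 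This yields $\E W_2(\lambda_n,\lambda)\lesssim \sigma n^{-1/4}$ at worst, which is too slow for the cross term.

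\textbf{Main obstacle.} The delicate point is achieving $n^{-1/2}$ rather than $n^{-1/4}$ in the cross term $2W_2(\lambda,\kappa)\,\E[W_2(\lambda_n,\lambda)+W_2(\kappa_n,\kappa)]$. To get around this, I would avoid the square-root loss by working directly with the quantile representation of the squared distance,
\[
W_2^2(\lambda_n,\kappa_n)-W_2^2(\lambda,\kappa)=\int_0^1\Big[(G_n-H_n)^2-(G-H)^2\Big],
\]
where $G,H$ are the quantile functions of $\lambda,\kappa$ and $G_n,H_n$ their empirical counterparts. Expanding this difference produces a linear term $2\int (G-H)\big((G_n-G)-(H_n-H)\big)$ and a quadratic term. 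The linear term is an averaged quantile process with a square-integrable weight, so it has expectation of order $\sigma^2 n^{-1/2}$. The quadratic term is controlled by $\E W_2^2(\lambda_n,\lambda)+\E W_2^2(\kappa_n,\kappa)$, which is of order $\sigma^2 n^{-1/2}$ by Step 2.
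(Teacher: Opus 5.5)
Your route differs from the paper's, and it is the better one. The paper's proof is your opening display followed by Cauchy--Schwarz. It closes by simply asserting $\E W_2(\lambda_n,\lambda)\lesssim\sigma n^{-1/2}$ and $W_2(\lambda,\kappa)\lesssim\sigma$. You correctly see that the first is not available under $T_{p'}$: a two-point law satisfies $T_1$ and has $W_2^2(\lambda_n,\lambda)=|\hat p-p|$, so $\E W_2(\lambda_n,\lambda)\asymp n^{-1/4}$. Your quantile expansion $2\int_0^1 D(A_n-B_n)+\int_0^1(A_n-B_n)^2$, with $D=G-H$, $A_n=G_n-G$, $B_n=H_n-H$, needs only $\E W_2^2(\lambda_n,\lambda)\lesssim\sigma^2 n^{-1/2}$. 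Your Step~2 does deliver this via $W_2^2\le 4\int|x-a|\,|F_n-F|\,dx$. Center $a$ at the mean or median; with $|x|$ the location leaks into the constant.

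Two gaps remain. First, you cannot ``absorb'' the mean shift into a constant depending only on $p'$. $T_{p'}(\sigma^2)$ is translation invariant, so the statement is false as written. Take $\lambda=\mathcal N(0,\sigma^2)$ and $\kappa=\mathcal N(\tau,\sigma^2)$, with independent samples $X_i$ and $Y_i=\tau+Y_i'$. Then $W_2^2(\lambda_n,\kappa_n)-\tau^2=W_2^2(\lambda_n,\kappa_n')-2\tau(\bar X_n-\bar Y_n')$, where $\kappa_n'$ is the empirical law of the $Y_i'$. Hence $\E|W_2^2(\lambda_n,\kappa_n)-W_2^2(\lambda,\kappa)|\ge 4\tau\sigma/\sqrt{\pi n}-C\sigma^2 n^{-1/2}$, which is not $O(\sigma^2 n^{-1/2})$ as $\tau/\sigma\to\infty$. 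The paper's assertion $W_2(\lambda,\kappa)\lesssim\sigma$ fails for the same reason. What your argument can prove is the bound $C\sigma\bigl(\sigma+W_2(\lambda,\kappa)\bigr)n^{-1/2}$, and you should state that corrected version explicitly.

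Second, the linear term is exactly where the obstacle you identified lives, and ``square-integrable weight'' does not dispose of it. Cauchy--Schwarz only gives $\|D\|_{L^2}W_2(\lambda_n,\lambda)$, which is $n^{-1/4}$ again. Moreover, for a two-point $\lambda$ the $L^2$ weight $|t-\tfrac12|^{-1/4}$ yields $\E|\int_0^1 D A_n|\asymp n^{-3/8}$, so square-integrability alone is not enough.

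What actually works is that $D=G-H$ is a difference of finite monotone functions. It is therefore locally bounded on $(0,1)$, with sub-Gaussian growth $|D(t)|\le|m_\lambda-m_\kappa|+C\sigma\sqrt{\log(2/\min(t,1-t))}$. Use the area identity $\int_0^1 D A_n\,dt=\int_{\R}\bigl[\mathcal D(F(x))-\mathcal D(F_n(x))\bigr]dx$, with $\mathcal D(s)=\int_0^s D$. On the event $\{F_n(x)\ge F(x)/2,\ 1-F_n(x)\ge(1-F(x))/2\}$, the integrand is at most $|F_n(x)-F(x)|\bigl(|m_\lambda-m_\kappa|+C\sigma\sqrt{\log(4/F(1-F))}\bigr)$. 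A binomial Chernoff bound controls the complement. Combining $\E|F_n(x)-F(x)|\le\sqrt{F(1-F)/n}$ with the sub-Gaussian tails then gives $\E|\int_0^1 D A_n|\lesssim\sigma\bigl(\sigma+|m_\lambda-m_\kappa|\bigr)n^{-1/2}$. This step is also exactly where the extra $W_2(\lambda,\kappa)$ in the corrected bound enters.
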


\begin{proof}
By the inequality $|a^2-b^2|\le (a+b)|a-b|$ and the reverse triangle inequality for $W_2$,
\begin{align*}
\big|W_2^2(\lambda_n,\kappa_n)-W_2^2(\lambda,\kappa)\big|
&\le \big(W_2(\lambda_n,\kappa_n)+W_2(\lambda,\kappa)\big)\,
\big|W_2(\lambda_n,\kappa_n)-W_2(\lambda,\kappa)\big|\\
&\le \big(W_2(\lambda_n,\kappa_n)+W_2(\lambda,\kappa)\big)\,
\big(W_2(\lambda_n,\lambda)+W_2(\kappa_n,\kappa)\big).
\end{align*}
Taking expectations and applying Cauchy--Schwarz yields
\[
\E\big|W_2^2(\lambda_n,\kappa_n)-W_2^2(\lambda,\kappa)\big|
\le
\sqrt{\E\big(W_2(\lambda_n,\kappa_n)+W_2(\lambda,\kappa)\big)^2}\,
\sqrt{\E\big(W_2(\lambda_n,\lambda)+W_2(\kappa_n,\kappa)\big)^2}.
\]
Under a $T_{p'}(\sigma^2)$ inequality with $p'\in[1,2]$, standard empirical transport bounds in dimension one give
$\E W_2(\lambda_n,\lambda)\lesssim \sigma n^{-1/2}$ and $\E W_2(\kappa_n,\kappa)\lesssim \sigma n^{-1/2}$,
while $\E W_2(\lambda,\kappa)\lesssim \sigma$ and $\sup_n \E W_2(\lambda_n,\kappa_n)\lesssim \sigma$
(by triangle inequality and uniform integrability).
Collecting these bounds gives the stated $n^{-1/2}$ rate for the squared quantity.
\end{proof}

\begin{proof}[Proof of Theorem~\ref{thm:finite-sample}]
We control the two components separately:
\[
\E\big|\widehat{\mathrm{LB}}(z)-\mathrm{LB}(z)\big|
\le
\E|\widehat S(z)-S(z)|\ +\ k\,\E\big|\widehat R_L(z)-R(z)\big|.
\]

\emph{Signal term.}
By Proposition~\ref{prop:wpp-rate} (applied with $p=2$ and dimension parameter $r(z)$),
and using that orthogonal projections are $1$-Lipschitz (hence preserve the $T_{p'}$ constants up to universal factors),
there exists $C_2>0$ such that
\[
\E|\widehat S(z)-S(z)|
\ \le\
C_2\,\sigma(z)^2\left\{r_{2,r(z)}(n)+\sqrt{\frac{d\log n}{n}}\right\}.
\]

\emph{Residual sliced term.}
Condition on the (possibly data-dependent) chosen subspace used in the residual step (in Algorithm~\ref{alg:css} this is $\hat U(z)$).
Conditional on this subspace, each $\theta_\ell$-projection produces a one-dimensional pair of measures, and by Lemma~\ref{lem:1d-w2-stab},
\[
\E\Big[\Big|W_2^2\big((\pi_{\theta_\ell})_\#\hat\mu_1^\perp,(\pi_{\theta_\ell})_\#\hat\mu_0^\perp\big)
-
W_2^2\big((\pi_{\theta_\ell})_\#\mu_1^\perp,(\pi_{\theta_\ell})_\#\mu_0^\perp\big)\Big|
\ \Big|\ \hat U(z),\theta_\ell\Big]
\ \le\ C_1\,\sigma(z)^2\,n^{-1/2}.
\]
Averaging over $\ell=1,\dots,L$ and then taking expectations yields the empirical component
\[
\E\Big|\frac1L\sum_{\ell=1}^L W_2^2\big((\pi_{\theta_\ell})_\#\hat\mu_1^\perp,(\pi_{\theta_\ell})_\#\hat\mu_0^\perp\big)
-
\frac1L\sum_{\ell=1}^L W_2^2\big((\pi_{\theta_\ell})_\#\mu_1^\perp,(\pi_{\theta_\ell})_\#\mu_0^\perp\big)\Big|
\ \le\ C_1\,\sigma(z)^2\,n^{-1/2}.
\]
For the Monte Carlo component, let $G(\theta):=W_2^2\big((\pi_{\theta})_\#\mu_1^\perp,(\pi_{\theta})_\#\mu_0^\perp\big)$.
Then $R(z)=\E_\theta[G(\theta)]$ and
\[
\E\Big|\frac1L\sum_{\ell=1}^L G(\theta_\ell)-\E_\theta G(\theta)\Big|
\le
\sqrt{\frac{\mathrm{Var}(G(\theta))}{L}}
\le
\sqrt{\frac{\E[G(\theta)^2]}{L}}
\ \le\ C_3\,\sigma(z)^2\,L^{-1/2},
\]
where the last inequality uses that $G(\theta)\le 2(\E U_\theta^2+\E V_\theta^2)$ under an independent coupling and
one-dimensional $T_{p'}$ control implies finite fourth moments bounded by constants times $\sigma(z)^4$.
Combining empirical and Monte Carlo errors yields
\[
\E|\widehat R_L(z)-R(z)|
\ \le\ C_4\,\sigma(z)^2\big(n^{-1/2}+L^{-1/2}\big).
\]
Multiplying by $k(z)$ and combining with the signal bound gives \eqref{thm:finite-sample}'s inequality.
\end{proof}

\subsection{Proof of Proposition~\ref{prop:pi_to_ot}}\label{app:pi-to-ot}

\begin{proof}[Proof of Proposition~\ref{prop:pi_to_ot}]
Fix $z$ and write $\nu_{a\mid z}:=\Law(Y(a)\mid Z=z)$ for $a\in\{0,1\}$.
Any joint law of $(Y(0),Y(1))\mid Z=z$ with these marginals corresponds to a coupling $\pi_z\in\Pi(\nu_{0\mid z},\nu_{1\mid z})$.
Therefore, the sharp lower endpoint of
\[
\psi_2(z)=\E\big[\|Y(1)-Y(0)\|^2\mid Z=z\big]
\]
over all feasible joint laws equals the infimum of $\int\|y_1-y_0\|^2\,\mathrm d\pi_z(y_0,y_1)$ over $\pi_z\in\Pi(\nu_{0\mid z},\nu_{1\mid z})$.
By definition, this infimum is exactly $W_2^2(\nu_{1\mid z},\nu_{0\mid z})$.
\end{proof}

\end{document}